\newtheorem{algorithm}{Algorithm}
\newtheorem{definition}{Definition}
\newtheorem{lemma}{Lemma}
\newtheorem{proposition}{Proposition}
\newtheorem{remark}{Remark}
\newtheorem{theorem}{Theorem}
\newenvironment{proof}{\begin{IEEEproof}}{\end{IEEEproof}}
\renewcommand{\IEEEQED}{\IEEEQEDopen}
\newcommand{\QED}{\hfill \IEEEQED}
\newcommand{\Nj}{\mathcal{N}_j}
\newcommand{\tn}{\hat\theta_N}
\newcommand{\vtn}{\hat\vartheta_N}
\newcommand{\zn}{\hat\zeta_N^n}
\newcommand{\zo}{\zeta^{n0}}
\newcommand{\q}{q^{-1}}
\newcommand{\z}{z^{-1}}
\newcommand{\R}{\mathbb{R}}
\newcommand{\E}{\mathbb{E}}
\newcommand{\Eb}{\bar{\E}}
\title{Identification of diffusively coupled linear networks through structured polynomial models}
\author{E.M.M. (Lizan) Kivits and Paul M.J. Van den Hof 
  \thanks{Paper submitted for publication in IEEE Trans. Automatic Control, 7 June 2021. Revised 21 January 2022, 12 May 2022, and 1 July 2022. }
  \thanks{This project has received funding from the European Research Council (ERC), Advanced Research Grant SYSDYNET, under the European Union's Horizon 2020 research and innovation programme (grant agreement No 694504).}
  \thanks{Lizan Kivits and Paul Van den Hof are with the Department of Electrical Engineering, Eindhoven University of Technology, Eindhoven, The Netherlands {\tt\small\{e.m.m.kivits, p.m.j.vandenhof\}@tue.nl}.}
}
\begin{document}

\maketitle
\thispagestyle{empty}
\pagestyle{empty}

\begin{abstract}
Physical dynamic networks most commonly consist of interconnections of physical components that can be described by diffusive couplings. These diffusive couplings imply that the cause-effect relationships in the interconnections are symmetric and therefore physical dynamic networks can be represented by undirected graphs. This paper shows how prediction error identification methods developed for linear time-invariant systems in polynomial form can be configured to consistently identify the parameters and the interconnection structure of diffusively coupled networks. Further, a multi-step least squares convex optimization algorithm is developed to solve the nonconvex optimization problem that results from the identification method.
\end{abstract}
\begin{IEEEkeywords}
Diffusive couplings, physical networks, parameter estimation, linear dynamic networks, system identification, data-driven modeling.
\end{IEEEkeywords}

\section{Introduction} \label{sec:intro}

\subsection{Dynamic Networks}
Physical networks can describe many physical processes from different domains, such as mechanical, magnetic, electrical, hydraulic, acoustic, thermal, and chemical processes. Their dynamic behavior is typically described by undirected dynamic interconnections between nodes, where the interconnections represent diffusive couplings \cite{CHENG2017,DORFLER2013,DORFLER2018}. A corresponding representation can be considered to consist of $L$ interconnected node signals $w_j(t)$, $j=1,\ldots,L$, of which the behavior is described according to a second order vector differential equation:
\begin{multline}\label{eq:msd}
  \mathsf{M}_{j0} \ddot w_j(t) + \mathsf{D}_{j0} \dot w_j(t)
  + \sum_{k\in\Nj} \mathsf{D}_{jk}[\dot w_j(t)-\dot w_k(t)] \\
  + \mathsf{K}_{j0} w_j(t) + \sum_{k\in\Nj} \mathsf{K}_{jk}[w_j(t)-w_k(t)] = \underbrace{r_j(t) + v_j(t)}_{u_j(t)},
\end{multline}
with real-valued coefficients $\mathsf{M}_{j0}\geq0$, $\mathsf{D}_{jk}\geq0$, $\mathsf{K}_{jk}\geq0$, $\mathsf{D}_{jj}=0$; $\mathsf{K}_{jj}=0$; $\Nj$ the set of indices of node signals $w_k(t)$, $k\neq j$, with connections to node signals $w_j(t)$; $u_j(t)$ the external signals composed of measured excitation signals $r_j(t)$ and unmeasured disturbances $v_j(t)$; and with $\dot w_j(t)$ and $\ddot w_j(t)$ the first and second order derivative of the node signals $w_j(t)$, respectively. The diffusive type of coupling induces the symmetry constraints $\mathsf{D}_{jk}=\mathsf{D}_{kj}$ and $\mathsf{K}_{jk}=\mathsf{K}_{kj}$ $\forall j,k$.

An obvious physical example of such a network is the mass-spring-damper system shown in Figure \ref{fig:msd}, in which masses $\mathsf{M}_{j0}$ are interconnected through dampers $\mathsf{D}_{jk}$ and springs $\mathsf{K}_{jk}$ with $k\neq0$ and are connected to the ground with dampers $\mathsf{D}_{j0}$ and springs $\mathsf{K}_{j0}$. The positions $w_j(t)$ of the masses $\mathsf{M}_{j0}$ are the signals that are considered to be the node signals\footnote{Note that a system as the one shown in Figure \ref{fig:msd} would require at least a two-dimensional position vector $w_j(t)$, but for notational convenience and without loss of generality, we will restrict our attention to scalar-valued node signals $w_j(t)$.}. The couplings between the masses are diffusive, because springs and dampers are symmetric components.

\begin{figure}[tb]
  \centering
  \includegraphics[page=2,height=3.3cm]{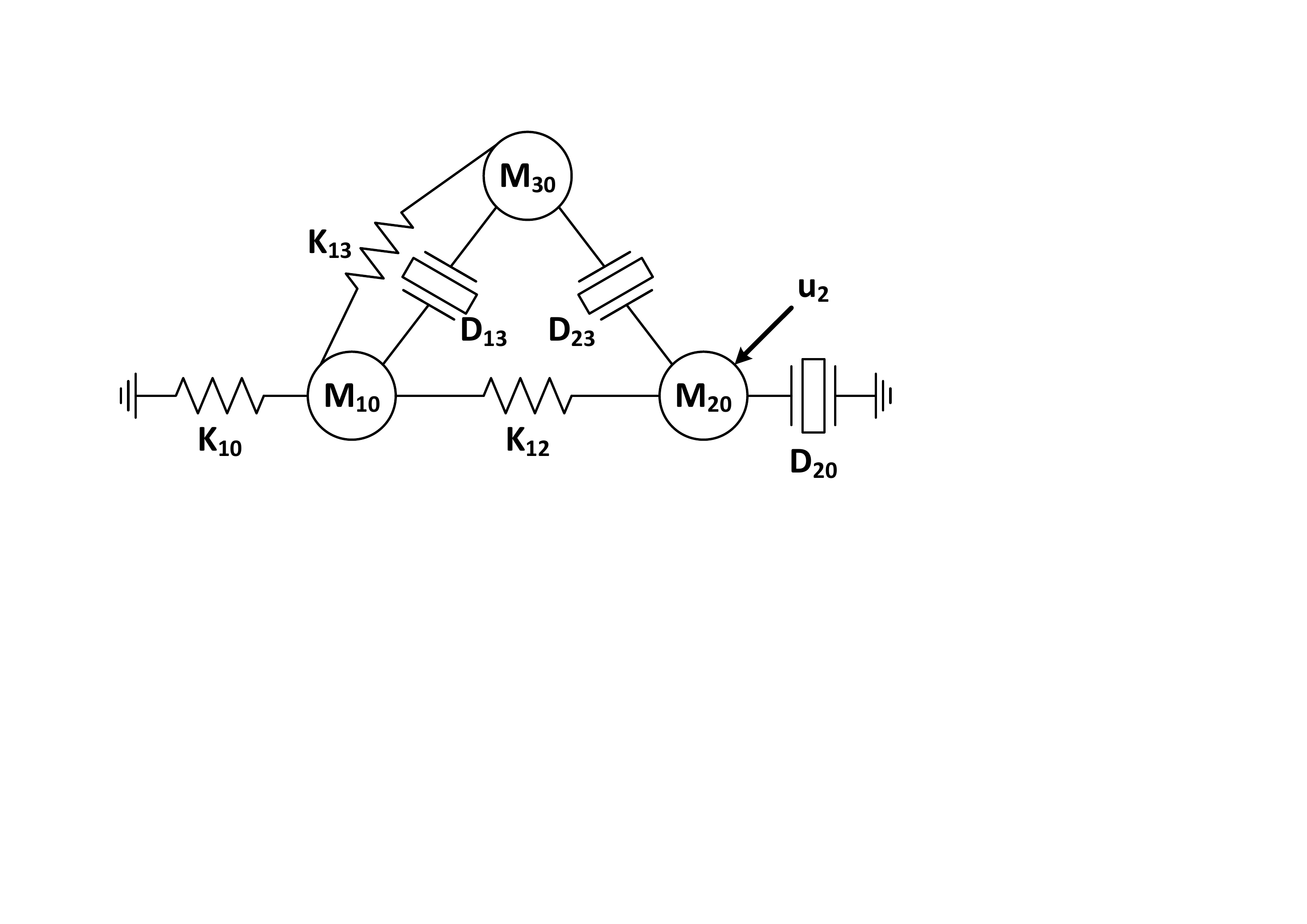}
  \caption{A network of masses ($\mathsf{M}_{j0}$), dampers ($\mathsf{D}_{jk}$) and springs ($\mathsf{K}_{jk}$). } \label{fig:msd}
  \vspace{-1em}
\end{figure}

The considered type of networks occur in many other applications, such as RLC circuits, power grids, and climate control systems.

\subsection{Network identification}
In this paper we address the problem of identifying the physical components in the network on the basis of measured node signals $w(t)$ and possibly external excitation signals $r(t)$. We develop a general framework for identifying such networks. Currently available methods for solving this problem can be classified in different categories. Black-box prediction error identification methods \cite{LJUNG1999} can be used to model the transfer functions from measured $r_j(t)$ signals to node signals $w_j(t)$, leading to a nontrivial second step in which the estimated models need to be converted to the structure representation of the physical network for arriving at estimated component values. Moreover, this modeling procedure and the required conversion would become essentially dependent on the particular location of the external signals $r_j(t)$. In a second category, black-box state-space models can be estimated from which the model parameters can be derived by applying matrix transformations \cite{FRISWELL1999,LOPES2015,RAMOS2013} or eigenvalue decompositions \cite{FRITZEN1986,LUS2003}. However, these methods typically do not have any guarantees on the statistical accuracy of the estimates and lack a consistency analysis. In another approach, state-space models with tailor-made physical parametrizations can be employed in a prediction error / maximum likelihood setting, typically leading to the situation that the network parameters appear nonlinearly in the state-space model, resulting in highly nonconvex optimization problems to solve \cite{Matlab_SYSID}.

In this paper we follow an approach that starts from the network representation of the model, while we maintain and exploit the network structure during the identification procedure.

Dynamic networks are currently topic of research in different areas, while exploiting different network representations. Often, state-space models are used, possibly involving diffusive couplings, e.g. in model reduction \cite{CHENG2021}, estimating network connectivity \cite{TIMME2007}, multi-agent consensus-type algorithms \cite{NABI2012}, and subspace identification \cite{HABER2014}. A different model setting is used by \cite{GONCALVES2007}, where transfer functions are being used to represent the dynamic interactions between node signals, exploited further in \cite{MATERASSI2012,VANWAARDE2021} for topology identification and in \cite{VANDENHOF2013} for prediction error identification of the network dynamics (modules). In the realistic situation that not all states of a network can be measured, the transfer function approach appears attractive for identifying the network, but at the same time it is less fit for representing the physical diffusive type of couplings that would need to be included. As a result, an identification framework that can effectively exploit the physical structure of diffusive couplings, while identifying the dynamics on the basis of selected measurements, is still missing.

\subsection{Research objective and contribution}
The overall objective of this research is to develop a comprehensive theory for the identification of the physical component in diffusively coupled linear networks, where the order of the dynamics is not restricted and possibly correlated disturbances can be present. The objective includes questions like which nodes to measure (sense) and which nodes to excite (actuate) in order to identify particular local dynamics in the network or to identify the full dynamics and topology of the network. In addition, consistency and minimum variance properties of estimates have to be specified. In this paper we focus on the problem of identifying the full dynamics and topology of diffusively coupled linear networks.

We will develop a prediction error framework for identifying the components in a diffusively coupled network of linear time-invariant systems, by fully adhering to the structure of the underlying constitutive model equations. We develop a polynomial representation of diffusively coupled networks, that is special due to its non-monicity and its symmetric structure. For this representation the standard (prediction error) identification algorithms cannot be applied directly. A dedicated prediction error identification method is developed that exploits the structured polynomial representation of the network and allows for handling dynamics of any finite order representing the interconnections and therefore also allows for identifying the topology of the network. New conditions for identifiability and consistent estimation of the network components are derived. While the developed prediction error method in general relies on nonconvex optimization, which is poorly scalable to large dimensions, an alternative multi-step algorithm is presented, following the recent developments in so-called weighted null-space fitting (WNSF) algorithm \cite{GALRINHO2019}. This algorithm is adapted to accommodate the particular structured models that are considered, by involving constrained optimizations rather than unconstrained ones.

This paper builds further on the preliminary work presented in \cite{KIVITS2019}, in which the first results on the polynomial representation are presented in the scope of particular linear regression schemes. These results are extended to the general situation of rational noise models, including detailed identifiability and consistency results as well as the implementation into an adapted WNSF algorithm.

After specifying the notation of our networks in continuous- and discrete-time in Section \ref{sec:phys}, the set-up for identification of the full network dynamics is described in Section \ref{sec:fullidsetup}. In order to be able to consistently identify the network dynamics, data informativity and network identifiability conditions need to be satisfied. These conditions are formulated in Section \ref{sec:fullid} as well as the results for consistent identification of the networks. Section \ref{sec:alg} contains the multi-step algorithm for consistently identifying the network dynamics and Section \ref{sec:simex} consists of a simulation example that illustrates and supports these results. Section \ref{sec:disc} contains some extensions after which conclusions are formulated in Section \ref{sec:conc}.

We consider the following notation throughout the paper. A polynomial matrix $A(\z)$ in complex indeterminate $\z$, consists of matrices $A_{\ell}$ and $(j,k)$-th polynomial elements $a_{jk}(\z)$ such that $A(\z) = \sum_{\ell=0}^{n_a} A_{\ell} z^{-\ell}$ and $a_{jk}(\z) = \sum_{\ell=0}^{n_a} a_{jk,\ell} z^{-\ell}$. Hence, the $(j,k)$-th element of the matrix $A_{\ell}$ is denoted by $a_{jk,\ell}$. Physical components are indicated in sans serif font: $\mathsf{A}$ or $\mathsf{a}$. A $p \times m$ rational function matrix $F(z)$ is proper if $\lim_{z\rightarrow\infty} F(z) = c \in \R^{p\times m}$; it is strictly proper if $c = 0$, and monic if $p=m$ and $c$ is the identity matrix. $F(z)$ is stable if all its poles are within the unit circle $|z| < 1$. As a signal framework we adopt the prediction error framework of \cite{LJUNG1999}, where quasi-stationary signals are defined as summations of a stationary stochastic process and a bounded deterministic signal, and $\Eb := \lim_{N\rightarrow\infty} \frac{1}{N} \sum_{t=1}^{N}\E$, with $\E$ the expectation operator.

\section{Physical network} \label{sec:phys}

\subsection{Higher order network} \label{ssec:high}
A physical network as described in the previous section is typically of second order, where all node signals are collected in $w(t)$. Network models that explain only a selection of the node signals can be constructed by removing nodes from the network through a Gaussian elimination procedure that is referred to as Kron reduction \cite{DORFLER2013,DORFLER2018} or immersion \cite{DANKERS2016}, which will generally lead to higher order dynamics between the remaining node signals. In order to accommodate this, we will include higher order terms in our model.
\begin{definition}[Physical network] \label{def:pn}
  A physical network is a network consisting of $L$ node signals $w_1(t), \ldots, w_L(t)$ interconnected through diffusive couplings and with at least one connection of a node to the ground node. The behavior of the node signals $w_j(t)$, $j=1,\ldots,L$, is described by
  \begin{equation}\label{eq:xy}
    \sum_{\ell=0}^{n_x} \mathsf{x}_{jj,\ell}w_j^{(\ell)}(t) + \sum_{k\in\Nj} \sum_{\ell=0}^{n_y} \mathsf{y}_{jk,\ell}[w_j^{(\ell)}(t)-w_k^{(\ell)}(t)] = u_j(t),
  \end{equation}
  with $n_x$ and $n_y$ the order of the dynamics in the network, with real-valued coefficients $\mathsf{x}_{jj,\ell}\geq0$, $\mathsf{y}_{jk,\ell}\geq0$, $\mathsf{y}_{jk,\ell}=\mathsf{y}_{kj,\ell}$, where $w_j^{(\ell)}(t)$ is the $\ell$-th derivative of $w_j(t)$ and where $u_j(t)$ is the external signal entering the $j$-th node. The network is assumed to be connected, which means that there is a path between every pair of nodes\footnote{The network is connected if its Laplacian matrix (i.e. the degree matrix minus the adjacency matrix) has a positive second smallest eigenvalue \cite{DORFLER2013}.}. \QED
\end{definition}
The graphical interpretation of the coefficients is as follows: $\mathsf{x}_{jj,n}$ represent the buffers, that is the components intrinsically related to the nodes $w_j$; $\mathsf{x}_{jj,\ell}$ with $\ell\neq n$ represent the components connecting the node $w_j$ to the ground node; and $\mathsf{y}_{jk,\ell}$ represent the components in the diffusive couplings between the nodes $w_j$ and $w_k$. The ground node is characterized by $w_{ground}(t)=0$ and therefore can be seen as a node with an infinite buffer, see also \cite{DORFLER2013}.

A graphical representation of a physical network is shown in Figure \ref{fig:pn}. The network dynamics is represented by the blue boxes containing the polynomials $x_{jj}=\sum_{\ell=0}^{n_x}\mathsf{x}_{jj,\ell}p^{\ell}$ and $y_{jk}=\sum_{\ell=0}^{n_y}\mathsf{y}_{jk,\ell}p^{\ell}$, with $p$ the differential operator $d/dt$, and the node signals are represented by the blue circles, which sum the diffusive couplings and the external signals. For example $w_5(t) = \mathsf{x}_{55} \big(w_5(t)-0\big) + \mathsf{y}_{45}\big(w_5(t)-w_4(t)\big) + u_5(t)$.

Furthermore, every matrix $X_{\ell}$ composed of elements $x_{jj,\ell}:=\mathsf{x}_{jj,\ell}$ is diagonal and every matrix $Y_{\ell}$ composed of elements $y_{jj,\ell}:=\sum_{k\in\Nj} \mathsf{y}_{jk,\ell}$ and $y_{jk,\ell}:=-\mathsf{y}_{jk,\ell}$ for $k\neq j$ is Laplacian\footnote{A Laplacian matrix is a symmetric matrix with nonpositive off-diagonal elements and with nonnegative diagonal elements that are equal to the negative sum of all other elements in the same row (or column) \cite{MESBAHI2010}.} representing an undirected graph of a specific physical component, i.e., the diffusive couplings of a specific order.

\begin{figure}
  \centering
  \includegraphics[page=2,height=3.9cm]{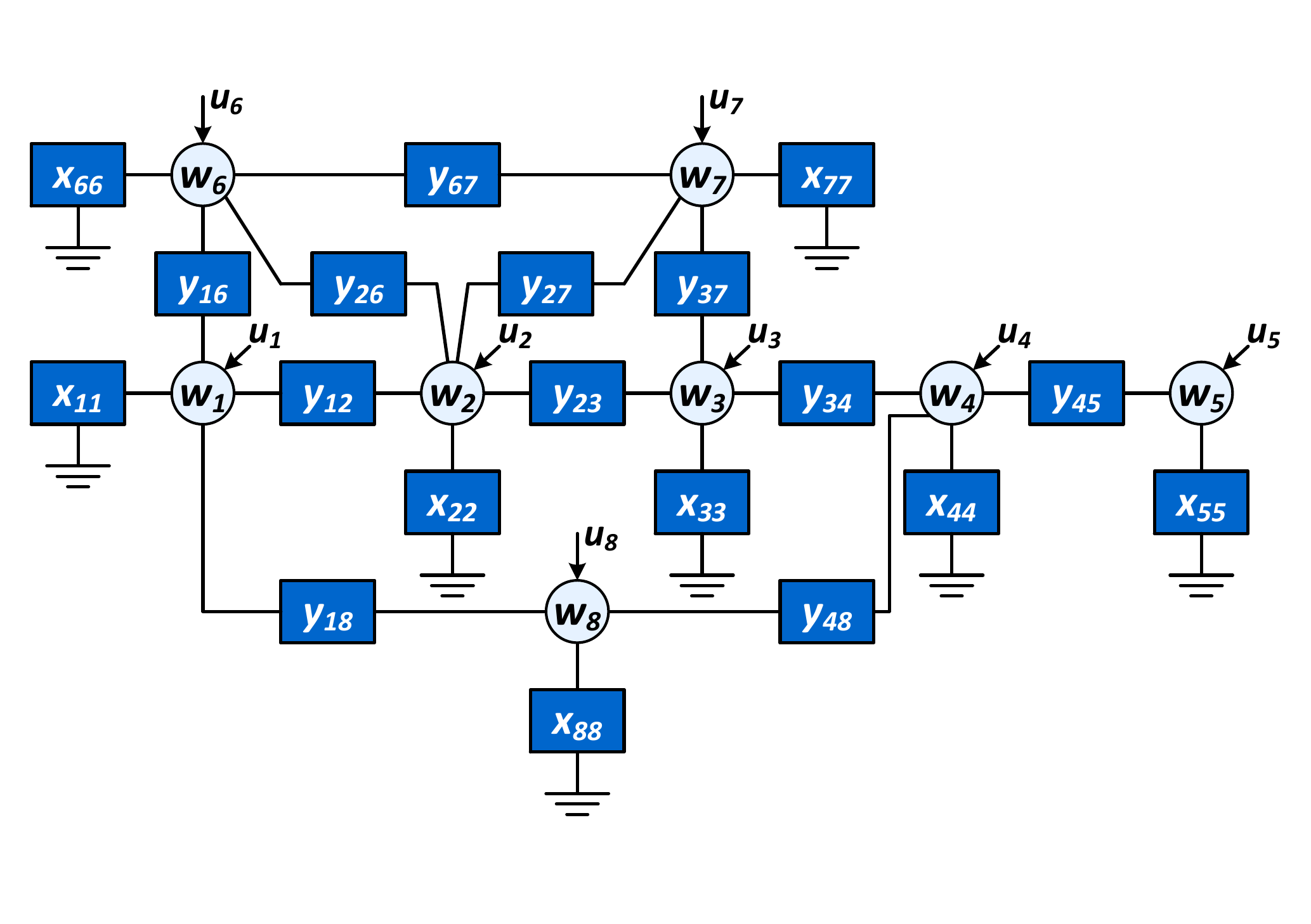}
  \caption{A physical network as defined in Definition \ref{def:pn}, with nodes $w_j$, inputs $u_j$, and dynamics between the nodes ($y_{jk}$) and to the ground node ($x_{jj}$). } \label{fig:pn}
  \vspace{-1em}
\end{figure}

\subsection{Discretization} \label{ssec:dt}
In order to fully exploit the results of network identification that typically have been developed for discrete-time systems, the continuous-time network is converted into an equivalent discrete-time form. Out of the group of discretization methods that commute with series, parallel and feedback connections of systems \cite{MORI1987} we select the backward difference method. This method is relatively simple, results in a causal network representation, and describes a unique bijective mapping between the continuous-time and the discrete-time model, by substituting
\begin{equation} \label{eq:shift}
  \frac{dw(t)}{dt}\biggr\rvert_{t=t_d}=\frac{w(t_d)-w(t_{d-1})}{T_s},
\end{equation}
with discrete-time sequence $t_d = d T_s$, $d=0,1,\ldots$ and time interval $T_s$. Using \eqref{eq:shift}, the continuous-time diffusively coupled network \eqref{eq:xy} can be approximated in discrete time by
\begin{multline}\label{eq:xy_dt}
  \sum_{\ell=0}^{n_x} \bar{\mathsf{x}}_{jj,\ell} q^{-\ell} w_j(t_d) + \sum_{k\in\Nj} \sum_{\ell=0}^{n_y} \bar{\mathsf{y}}_{jk,\ell} q^{-\ell} [w_j(t_d)-w_k(t_d)]\\ = u_j(t_d),
\end{multline}
with $q^{-1}$ the shift operator meaning $q^{-1}w_j(t_d)=w_j(t_{d-1})$ and with
\begin{align}
  \bar{\mathsf{x}}_{jj,\ell}& = (-1)^{\ell} \textstyle\sum_{i=\ell}^{n_x}   \binom{i}{\ell} T_s^{-i} \mathsf{x}_{jj,i},\label{eq:bar1}\\
  \bar{\mathsf{y}}_{jk,\ell}& = (-1)^{\ell} \textstyle\sum_{i=\ell}^{n_y} \binom{i}{\ell} T_s^{-i} \mathsf{y}_{jk,i},\label{eq:bar2}
\end{align}
where $\binom{i}{\ell}$ is a binomial coefficient. In the sequel, $(t-i)$ is used for $t_{d-i}=t_d-iT_s$. The expressions for the node signals \eqref{eq:xy_dt} can be combined in a matrix equation describing the network as
\begin{equation}\label{eq:xy_DT}
  \bar{X}(\q) w(t) + \bar{Y}(\q) w(t) = u(t),
\end{equation}
with $\bar{X}(\q)$ and $\bar{Y}(\q)$ polynomial matrices in the shift operator $q^{-1}$ and composed of elements
\begin{align}
  \bar{X}_{jk}(\q)&=\begin{cases}
    \sum_{\ell=0}^{n_x} \bar{\mathsf{x}}_{jj,\ell} q^{-\ell}, & \mbox{if } k=j \\
    0,                                                      & \mbox{otherwise}
  \end{cases} \label{eq:barX}\\
  \bar{Y}_{jk}(\q)&=\begin{cases}
    \sum_{m\in\Nj} \sum_{\ell=0}^{n_y} \bar{\mathsf{y}}_{jm,\ell} q^{-\ell}, &\mbox{if } k=j\\
    -\sum_{\ell=0}^{n_y} \bar{\mathsf{y}}_{jk,\ell} q^{-\ell},               &\mbox{if } k\in\Nj\\
    0,                                                                           &\mbox{otherwise.}
  \end{cases} \label{eq:barY}
\end{align}
Observe that $\bar{X}(\q)$ is diagonal and $\bar{Y}(\q)$ is Laplacian, implying that the structural properties of \eqref{eq:xy} are maintained in \eqref{eq:xy_DT}-\eqref{eq:barY}.
In the sequel, we will use the notation $A(\q) = \bar{X}(\q)+\bar{Y}(\q)$ while $\bar X(\q)$ and $\bar{Y}(\q)$ can always be uniquely recovered from $A(\q)$, because of their particular structure.

\section{Identification set-up} \label{sec:fullidsetup}
As mentioned before, the objective of this paper is to identify the full dynamics and the topology of diffusively coupled networks. In this section, the identification setting is described, which includes the network model, the network predictor, the model set, and the identification criterion.

The node signals in the network might be affected by a user-applied excitation signal and subject to a disturbance signal. This needs to be included in the network description, which is achieved by splitting the external signal as
\begin{equation}\label{eq:input}
  u(t) := B(\q) r(t) + F(q) e(t),
\end{equation}
where, the known excitation signals $r(t)$ enter the network through dynamics described by polynomial matrix $B(\q)$ and where the unknown disturbance signals acting on the network are modeled as a filtered white noise, i.e. $F(q)$ is a rational matrix and $e(t)$ is a vector-valued wide-sense stationary white noise process, i.e. $\E[e(t)e^T(t-\tau)] = 0$ for $\tau \neq 0$.
\begin{definition}[Network model] \label{def:idsetup}
  The network that will be considered during identification is assumed to be connected, with at least one connection to the ground node; it consists of $L$ node signals $w(t)$ and $K$ excitation signals $r(t)$; and is defined as
  \begin{equation}\label{eq:ABF_DT}
    A(\q) w(t) =  B(\q) r(t) + F(q) e(t),
  \end{equation}
  with
  \begin{itemize}
    \item $A(\q)=\sum_{k=0}^{n_a} A_k q^{-k} \in\mathbb{R}^{L\times L}[\q]$, with $a_{jk}(\q)\! =\! a_{kj}(\q),\forall k,j$ and $A^{-1}(\q)$ stable.
    \item $B(\q)\in \mathbb{R}^{L\times K}[\q]$.
    \item $F(q) \in \mathcal{H}:=\{F\in\mathbb{R}^{L\times L}(q)\ |\ F\ \mbox{monic, stable and}\linebreak \mbox{stably invertible} \}$.
    \item $\Lambda\succ0$ the covariance matrix of the noise $e(t)$.
    \item $r(t)$ is a deterministic and bounded sequence.
    \item $e(t)$ is a zero-mean white noise process with bounded moments of an order larger than 4 (\cite{LJUNG1999}). \QED
  \end{itemize}
\end{definition}

\begin{lemma} \label{lem:rank}
  In \eqref{eq:ABF_DT} it holds that $\text{rank}(A_0)=L$.
\end{lemma}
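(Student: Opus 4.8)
The plan is to show that $A_0$, the constant term of $A(\q) = \bar X(\q) + \bar Y(\q)$, is nonsingular by exploiting the explicit structure of its summands. From \eqref{eq:barX}--\eqref{eq:barY}, $A_0 = \bar X_0 + \bar Y_0$ where $\bar X_0 = \operatorname{diag}(\bar{\mathsf{x}}_{11,0},\ldots,\bar{\mathsf{x}}_{LL,0})$ and $\bar Y_0$ is the Laplacian matrix with entries $\bar{\mathsf{y}}_{jj,0} = \sum_{k\in\Nj}\bar{\mathsf{y}}_{jk,0}$ and $-\bar{\mathsf{y}}_{jk,0}$ off-diagonal. First I would use \eqref{eq:bar1}--\eqref{eq:bar2} at $\ell=0$ to write $\bar{\mathsf{x}}_{jj,0} = \sum_{i=0}^{n_x} T_s^{-i}\mathsf{x}_{jj,i}$ and $\bar{\mathsf{y}}_{jk,0} = \sum_{i=0}^{n_y} T_s^{-i}\mathsf{y}_{jk,i}$, and observe that since all $\mathsf{x}_{jj,i}\ge 0$, $\mathsf{y}_{jk,i}\ge 0$ and $T_s>0$, these discretized coefficients are all nonnegative; moreover $\bar{\mathsf{y}}_{jk,0}=\bar{\mathsf{y}}_{kj,0}$ is inherited from the symmetry of the $\mathsf{y}$-coefficients.

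The key structural observation is then that $A_0$ is itself a (generalized, "grounded") Laplacian: it is symmetric, has nonpositive off-diagonal entries $-\bar{\mathsf{y}}_{jk,0}$, and each diagonal entry $\bar{\mathsf{x}}_{jj,0} + \sum_{k\in\Nj}\bar{\mathsf{y}}_{jk,0}$ is at least the absolute row sum of the off-diagonal part, with strict inequality in any row $j$ for which $\bar{\mathsf{x}}_{jj,0}>0$, i.e. any node that has a direct connection to the ground node. So $A_0 = L_{\bar Y} + \bar X_0$ with $L_{\bar Y}$ the Laplacian of the (undirected, connected) interconnection graph and $\bar X_0$ a nonnegative diagonal matrix that is nonzero (by Definition \ref{def:pn}, at least one node connects to the ground, and $\bar{\mathsf{x}}_{jj,0}>0$ whenever any $\mathsf{x}_{jj,i}>0$). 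I would then invoke the standard diagonal-dominance / matrix-tree argument: for any $v\ne 0$, $v^T A_0 v = v^T L_{\bar Y} v + v^T \bar X_0 v = \tfrac12\sum_{j}\sum_{k\in\Nj}\bar{\mathsf{y}}_{jk,0}(v_j-v_k)^2 + \sum_j \bar{\mathsf{x}}_{jj,0} v_j^2 \ge 0$, and equality forces $v$ constant on each connected component (from the first term, using connectedness this means $v$ is a single constant vector) and $v_j=0$ on every grounded node (from the second term), hence $v=0$. Therefore $A_0\succ 0$, so $\operatorname{rank}(A_0)=L$.

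The main obstacle — really the only nontrivial point — is being careful about the edge weights: one must confirm that "connected" in Definition \ref{def:pn} guarantees the graph with weights $\bar{\mathsf{y}}_{jk,0}$ is still connected. This needs the observation that $\bar{\mathsf{y}}_{jk,0} = \sum_{i=0}^{n_y} T_s^{-i}\mathsf{y}_{jk,i} > 0$ precisely when at least one $\mathsf{y}_{jk,i}>0$, i.e. exactly when $k\in\Nj$; since $\Nj$ encodes the interconnection structure and the network is connected, no edge of the original graph is lost in passing to the discretized weights, and the positivity in the "to-ground" direction ($\bar{\mathsf{x}}_{jj,0}>0$ for at least one $j$) is likewise preserved. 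An alternative, slightly slicker route avoiding the quadratic-form bookkeeping is to write $A_0 = \mathcal{L}$, the Laplacian of the augmented graph that includes the ground node (whose corresponding row/column is deleted), and cite the known fact that deleting any one row and column of a connected graph's Laplacian yields a nonsingular (in fact positive definite) matrix; either way the conclusion $\operatorname{rank}(A_0)=L$ follows, and I would present whichever is shorter in context.
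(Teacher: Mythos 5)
Your proposal is correct and follows essentially the same route as the paper: decompose $A_0=\bar X_0+\bar Y_0$ into a nonnegative diagonal part and a Laplacian part, use connectedness to show the Laplacian part has kernel $\operatorname{span}(\bm 1)$, and use the existence of a ground connection to show $\bm 1\notin\ker(\bar X_0)$, so the two kernels intersect trivially; your quadratic-form computation is just the explicit version of the paper's statement that $\ker(\bar X_0+\bar Y_0)=\ker(\bar X_0)\cap\ker(\bar Y_0)$ for positive semidefinite summands. Your added check that the discretized weights $\bar{\mathsf{y}}_{jk,0}$ are strictly positive exactly when $k\in\Nj$ (so that connectedness survives discretization) is a detail the paper leaves implicit, and is a worthwhile refinement rather than a different approach.
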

\begin{proof}
  $A_0 = \bar{X}_0 + \bar{Y}_0$, with diagonal $\bar{X}_0 = \sum_{i=1}^{n_a} T_s^{-i} X_i$ and Laplacian $\bar{Y}_0 = \sum_{i=1}^{n_a} T_s^{-i} Y_i$, following from \eqref{eq:bar1} with $n_x=n_a$ and \eqref{eq:bar2} with $n_y=n_a$, respectively.
    Since $\bar{Y}_0$ is Laplacian, the sum of each row is equal to $0$, that is $\bar{Y}_0 \bm{1} = 0$, with $\bm{1} = \begin{bmatrix}1&1&\ldots&1\end{bmatrix}^{\top} \in \mathbb{R}^L$ \cite{MESBAHI2010}.
  Because the network is connected, $\bar{Y}_0$ has one-dimensional kernel $\ker(\bar{Y}_0) = \text{span}(\bm{1})$ \cite{DORFLER2013}.
    Since the network has at least one connection to the ground node, $\exists j,\ell$ such that $x_{jj,\ell}>0$, implying that $\bar{x}_{jj,0}>0$ and thus $\bar{X}_0\succeq0$.
  The vectors that span the kernel of $\bar{X}_0$ will have at least one zero element, implying that $\ker(\bar{Y}_0)\not\subset\ker(\bar{X}_0)$.
    Because both $\bar{Y}_0\succeq0$ and $\bar{X}_0\succeq0$, $\ker(A_0) = \ker(\bar{Y}_0+\bar{X}_0) = \ker(\bar{Y}_0) \cap \ker(\bar{X}_0) = \emptyset$ and hence, $\text{rank}(A_0)=L$.
\end{proof}

$\text{rank}(A_0)=L$ also implies that $A^{-1}(\q)$ exists and is proper, which means that the network is well-posed. The network is also stable as $A^{-1}(\q)$ is stable.

Often, $B(\q)$ is chosen to be binary, diagonal and known, which represents the assumption that each external excitation signal directly enters the network at a distinct node.

As a result, the considered networks lead to polynomial models\footnote{Polynomial models are linear time-invariant dynamic models of the form $A(\q)y(t) = E^{-1}(\q)B(\q)u(t) + D^{-1}(\q)C(\q)e(t)$, where $A(\q)$, $B(\q)$, $C(\q)$, $D(\q)$, and $E(\q)$ are polynomials in $\q$ that are all monic except for $B(\q)$ \cite{LJUNG1999,HANNAN2012}.} with the particular properties that $A(\q)$ is symmetric and nonmonic. Moreover, if $F(q)$ is polynomial or even stronger if $F(q)=I$, the network \eqref{eq:ABF_DT} leads to an ARMAX-like or ARX-like\footnote{The structure is formally only an ARMAX (autoregressive-moving average with exogenous variables) or ARX (autoregressive with exogenous variables) structure if the $A(\q)$ polynomial is monic \cite{HANNAN2012}. } model structure, respectively.

Now the network representation and its properties have been defined, the next step is to formulate the identification setting.

\subsection{Network predictor}
The objective is to identify the dynamics of the complete network. This estimation is performed using a prediction error method, which is the most common system identification method and it is applicable to networks \cite{VANDENHOF2013}. In order to identify the  network dynamics, all node signals $w(t)$ are predicted based on the measured signals that are available in the network. This leads to the following predictor.
\begin{definition}[Network predictor]
  In line with \cite{WEERTS2016}, the network predictor is defined as the conditional expectation
\begin{equation}\label{eq:npdef}
  \hat{w}(t|t-1) = \E\{w(t) ~|~ w^{t-1}, ~r^t \},
\end{equation}
where $w^{t-1}$ represents the past of $w(t)$, that is $w(t-1), \linebreak w(t-2), \ldots$ and $r^t$ represents $r(t), r(t-1), \ldots$ \QED
\end{definition}
\begin{proposition}[Network predictor]
  For a network model \eqref{eq:ABF_DT}, the one-step-ahead network predictor \eqref{eq:npdef} is given by (omitting arguments $q$)
  \begin{equation}\label{eq:np}
    \hat{w}(t|t-1) :=  \left[I - A_0^{-1} F^{-1}A \right] w(t) + A_0^{-1} F^{-1} B r(t).
  \end{equation}
\end{proposition}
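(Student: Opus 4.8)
The plan is to derive \eqref{eq:np} from the innovations form of the model \eqref{eq:ABF_DT} by splitting, in the expression for $w(t)$, the part that is measurable with respect to the data $\{w^{t-1}, r^t\}$ from the part carried by the current innovation $e(t)$. First I would isolate $A_0 w(t)$: writing $A(\q) = A_0 + \sum_{k=1}^{n_a} A_k q^{-k}$ and using that $A_0$ is invertible (Lemma~\ref{lem:rank}), \eqref{eq:ABF_DT} gives
\[
  w(t) = A_0^{-1}B(\q)r(t) - A_0^{-1}\sum_{k=1}^{n_a} A_k w(t-k) + A_0^{-1}F(q)e(t).
\]
Since $F(q)$ is monic, write $F(q) = I + \tilde F(q)$ with $\tilde F(q)$ strictly proper, so that $A_0^{-1}F(q)e(t) = A_0^{-1}e(t) + A_0^{-1}\tilde F(q)e(t)$, where the second term depends only on $e(t-1), e(t-2), \dots$

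Next I would argue that every term on the right-hand side except $A_0^{-1}e(t)$ is measurable with respect to the conditioning information. The signals $w(t-k)$ with $k\ge1$ and $r(t), r(t-1),\dots$ belong to $w^{t-1}$ and $r^t$ by definition. For the past innovations, stable invertibility of $F$ together with the fact that $A(\q)$ and $B(\q)$ are polynomials in $\q$ (hence causal) yields $e(s) = F^{-1}(q)\bigl(A(\q)w(s)-B(\q)r(s)\bigr)$ as a causal, stable function of $\{w(\tau),r(\tau):\tau\le s\}$; applied for $s\le t-1$ this shows $e^{t-1}$ is reconstructible from $\{w^{t-1},r^{t-1}\}$, hence $A_0^{-1}\tilde F(q)e(t)$ is known given the data. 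Finally, $e(t)$ is zero-mean and, by the white-noise assumption on $e$ together with the standing assumptions of Definition~\ref{def:idsetup} and the adopted prediction-error framework \cite{LJUNG1999,WEERTS2016}, its conditional expectation given $\{w^{t-1},r^t\}$ is zero. Taking the conditional expectation \eqref{eq:npdef} therefore gives $\hat w(t|t-1) = w(t) - A_0^{-1}e(t)$.

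Finally, substituting back $e(t) = F^{-1}(q)\bigl(A(\q)w(t)-B(\q)r(t)\bigr)$ and collecting terms yields
\[
  \hat w(t|t-1) = \bigl[I - A_0^{-1}F^{-1}A\bigr]w(t) + A_0^{-1}F^{-1}Br(t),
\]
which is the claim. As a consistency check I would verify that $I - A_0^{-1}F^{-1}A$ is strictly proper: since $F^{-1}(q)$ is monic and the constant term of $A(\q)$ is $A_0$, the constant term of $A_0^{-1}F^{-1}A$ equals $A_0^{-1} I A_0 = I$, so it cancels; and $I - A_0^{-1}F^{-1}A$ is stable, being the difference of $I$ and the product of the stable $F^{-1}$ with a polynomial. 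This confirms that the right-hand side of \eqref{eq:np} depends only on $w^{t-1}$ and $r^t$ and defines a well-posed quasi-stationary predictor.

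I expect the main obstacle to lie in the second step, namely the rigorous justification that conditioning on $\{w^{t-1},r^t\}$ is equivalent to conditioning on $\{e^{t-1},r^t\}$ — the invertibility argument that reconstructs the past noise from past data — and the accompanying assertion that $\E\{e(t)\mid w^{t-1},r^t\}=0$, which requires more than mere whiteness of $e(t)$ and is precisely where the framework-level hypotheses carry the argument; the remaining manipulations are routine polynomial algebra.
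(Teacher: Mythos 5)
Your proposal is correct and follows essentially the same route as the paper: rewrite \eqref{eq:ABF_DT} so that $w(t)$ decomposes into a term depending only on $\{w^{t-1},r^t\}$ plus the innovation contribution $A_0^{-1}e(t)$, use strict properness of $I-A_0^{-1}F^{-1}A$ (and properness of $A_0^{-1}F^{-1}B$) together with $\E\{e(t)\mid w^{t-1},r^t\}=0$, and take the conditional expectation \eqref{eq:npdef}. The paper obtains the decomposition in one step by premultiplying with $A_0^{-1}F^{-1}$ and adding $w(t)$, whereas you reach the same identity \eqref{eq:np1} by first isolating $A_0w(t)$ and substituting $e(t)=F^{-1}(A w-Br)$ at the end; your extra care in reconstructing past innovations from past data merely fills in what the paper leaves implicit in ``follows directly by applying its definition.''
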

\begin{proof}
  The network \eqref{eq:ABF_DT} can be described by
  \begin{equation*}
    A w(t) = B r(t) + F e(t) = B r(t) + F A_0 A_0^{-1} e(t).
  \end{equation*}
  Premultiplying with $A_0^{-1} F^{-1}$ gives
  \begin{equation*}
    A_0^{-1} F^{-1} A w(t) = A_0^{-1} F^{-1} B r(t) + A_0^{-1} e(t).
  \end{equation*}
  Adding $w(t)$ to both sides of the equality and rewriting gives
  \begin{equation} \label{eq:np1}
     w(t) = \left[I\! -\! A_0^{-1} F^{-1} A \right]\! w(t)\! +\! A_0^{-1} F^{-1} B r(t)\! +\! A_0^{-1} e(t)
  \end{equation}
  where the factor $A_0^{-1}$ makes the filter $\left[I - A_0^{-1} F^{-1} A \right]$ strictly proper and where $A_0^{-1} F^{-1} B$ is proper. The one-step-ahead network predictor \eqref{eq:np} follows directly by applying its definition \eqref{eq:npdef} to \eqref{eq:np1}.
\end{proof}
\begin{proposition}[Innovation] \label{prop:in}
  The innovation corresponding to the network predictor \eqref{eq:np} is
  \begin{equation}\label{eq:in}
    \bar{e}(t) := w(t) - \hat{w}(t|t-1) = A_0^{-1} e(t),
  \end{equation}
  which has covariance matrix $\bar{\Lambda}=A_0 ^{-1}\Lambda A_0 ^{-1}$.
\end{proposition}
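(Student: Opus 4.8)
The plan is to start from the characterization of the innovation as the one-step-ahead prediction error and simply substitute the decomposition of $w(t)$ obtained in the previous proposition. Recall that equation \eqref{eq:np1} states
\[
  w(t) = \left[I - A_0^{-1} F^{-1} A \right] w(t) + A_0^{-1} F^{-1} B r(t) + A_0^{-1} e(t),
\]
and that the network predictor \eqref{eq:np} is exactly the first two terms on the right-hand side, by definition of the conditional expectation (which annihilates the term $A_0^{-1}e(t)$ since the filter $A_0^{-1}$ is monic and hence $A_0^{-1}e(t)$ depends only on $e(t),e(t-1),\dots$ with the $e(t)$ contribution being $e(t)$ itself, which is orthogonal to $w^{t-1},r^t$). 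Subtracting \eqref{eq:np} from \eqref{eq:np1} then leaves precisely $\bar e(t) = w(t) - \hat w(t|t-1) = A_0^{-1} e(t)$, which is the first claim.

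For the covariance claim, I would compute $\Eb[\bar e(t)\bar e^T(t)]$ directly. Since $\bar e(t) = A_0^{-1} e(t)$ and $A_0$ is a constant (non-dynamic) invertible matrix — its inverse being constant as well, not a genuine filter — this is just the linear transformation $\bar e(t) = A_0^{-1} e(t)$ of the white-noise vector $e(t)$. Hence $\Eb[\bar e(t)\bar e^T(t)] = A_0^{-1}\, \E[e(t)e^T(t)]\, (A_0^{-1})^T = A_0^{-1}\Lambda (A_0^{-1})^T$. Because $A(\q)$ is symmetric, so is its constant term $A_0$, giving $(A_0^{-1})^T = A_0^{-1}$ and therefore $\bar\Lambda = A_0^{-1}\Lambda A_0^{-1}$, as stated.

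The only point that needs a little care — and the main (minor) obstacle — is justifying that $A_0^{-1}e(t)$ really contributes nothing to the conditional expectation in \eqref{eq:npdef}, i.e. that $[I - A_0^{-1}F^{-1}A]$ is strictly proper so that the filtered-$w$ term involves only $w^{t-1}$, and that $A_0^{-1}F^{-1}B$ is proper in a way compatible with using $r^t$; but this has already been established in the proof of the preceding proposition, where it is noted that the factor $A_0^{-1}$ makes $[I - A_0^{-1}F^{-1}A]$ strictly proper and $A_0^{-1}F^{-1}B$ proper. Given that, the innovation identity is immediate. I would also remark that $A_0^{-1}$ exists by Lemma \ref{lem:rank}, and that the symmetry of $A_0$ (used in the covariance simplification) follows from the symmetry constraint $a_{jk}(\q)=a_{kj}(\q)$ in Definition \ref{def:idsetup} evaluated at the zeroth coefficient.
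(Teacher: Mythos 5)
Your proposal is correct and follows essentially the same route as the paper: substitute the predictor \eqref{eq:np} and the network equation (in the rearranged form \eqref{eq:np1}) into the definition \eqref{eq:in}, so the first two terms cancel and $\bar e(t)=A_0^{-1}e(t)$ remains, after which the covariance is the static congruence $A_0^{-1}\Lambda A_0^{-\top}=A_0^{-1}\Lambda A_0^{-1}$ by symmetry of $A_0$. Your added remarks on strict properness and on the symmetry of $A_0$ only make explicit what the paper leaves implicit.
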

\begin{proof}
  This follows directly from subsequently substituting $\hat{w}(t|t-1)$ \eqref{eq:np} and $w(t)$ \eqref{eq:ABF_DT} into \eqref{eq:in}.
\end{proof}
The innovation is a scaled version of the driving noise process. As $A_0 $ is not necessarily diagonal, the scaling possibly causes correlations among the noise channels, but the innovation signal $\bar{e}(t)$ remains a white noise process.

\subsection{Model set and prediction error}
The network models that will be considered during identification are gathered in the network model set.
\begin{definition}[Network model set]
  The model set is defined as a set of parametrized functions as
  \begin{equation}\label{eq:modelset}
    \mathcal{M}:=\{M(\theta),\ \theta\in\Theta\ \subset \mathbb{R}^d \},
  \end{equation}
  with $d \in \mathbb{N}$, with all particular models
  \begin{equation} \label{eq:model}
      M(\theta):=\big(A(\q,\theta),\ B(\q,\theta),\ F(q,\theta), \Lambda(\theta)\big)
  \end{equation}
  satisfying the properties in Definition \ref{def:idsetup}. \QED
\end{definition}
In this setting, $\theta$ contains all the unknown coefficients that appear in the entries of the model matrices $A, B, F$ and $\Lambda$.

The experimental data that are available for identification are generated by the true system.
\begin{definition}[Data generating system] \label{def:dgn}
  The data generating system $\mathcal{S}$ is denoted by the model
  \begin{equation}\label{eq:dgn}
    M^0 := (A^0,B^0,F^0,\Lambda^0).
  \end{equation} \QED
\end{definition}
The true system is in the model set $(\mathcal{S}\in\mathcal{M})$ if $\exists \theta^0\in\Theta$ such that $M(\theta^0)=M^0$, where $\theta^0$ indicate the true parameters.

Using the parametrized network model set, the parametrized one-step-ahead network predictor is defined.
\begin{definition}[Parametrized predictor] \label{def:pp}
  The parametrized network predictor is defined in accordance with \eqref{eq:np} as
  \begin{equation}\label{eq:pnp}
    \hat{w}(t|t-1;\theta) = W(q,\theta) z(t),
  \end{equation}
  with data $z(t) := \begin{bmatrix}w(t)\\r(t)\end{bmatrix}$, and predictor filter
  \begin{equation}\label{eq:pf}
    W(q,\theta) := \begin{bmatrix}I-W_w(q,\theta)& W_r(q,\theta)\end{bmatrix},
  \end{equation}
where
  \begin{align}
    W_w(q,\theta) &= A_0^{-1}(\theta)F^{-1}(q,\theta)A(\q,\theta),    \label{eq:pnpWw}\\
    W_r(q,\theta) &= A_0^{-1}(\theta)F^{-1}(q,\theta)B(\q,\theta).    \label{eq:pnpWr}
  \end{align} \QED
\end{definition}
The parametrized predictor leads to the prediction error.
\begin{proposition}[Prediction error] \label{prop:pe}
  The prediction error corresponding to the parametrized predictor \eqref{eq:pnp} is defined as
  \begin{equation}\label{eq:ped}
    \bar{\varepsilon}(t,\theta) := w(t) - \hat{w}(t|t-1;\theta),
  \end{equation}
  which is obtained as (omitting argument $q$)
  \begin{align}
    \bar{\varepsilon}(t,\theta)
    &= A_0^{-1}(\theta) F^{-1}(\theta) \left[ A(\theta) w(t) -  B(\theta) r(t) \right], \label{eq:pe}\\
    &= W_w(q,\theta) w(t) - W_r(q,\theta) r(t), \label{eq:peW}
  \end{align}
  which equals the innovation $\bar{e}(t)$ \eqref{eq:in} for $\theta=\theta^0$.
\end{proposition}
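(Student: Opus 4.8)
The plan is to obtain \eqref{eq:pe}--\eqref{eq:peW} by direct substitution of the parametrized predictor into the definition \eqref{eq:ped}, and then to obtain the last claim by specializing to $\theta^0$ and using the data generating equation. First I would recall from Definition \ref{def:pp} that, with $z(t)$ and $W(q,\theta)$ as in \eqref{eq:pnp}--\eqref{eq:pnpWr}, the parametrized predictor reads $\hat w(t|t-1;\theta) = \big[I - A_0^{-1}(\theta)F^{-1}(\theta)A(\theta)\big]w(t) + A_0^{-1}(\theta)F^{-1}(\theta)B(\theta)r(t)$ (omitting the argument $q$), in complete analogy with the unparametrized predictor \eqref{eq:np}. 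Inserting this into $\bar\varepsilon(t,\theta) = w(t) - \hat w(t|t-1;\theta)$, the outer $w(t)$ cancels the $I\,w(t)$ term of the predictor, leaving $\bar\varepsilon(t,\theta) = A_0^{-1}(\theta)F^{-1}(\theta)A(\theta)w(t) - A_0^{-1}(\theta)F^{-1}(\theta)B(\theta)r(t)$; factoring out the common left factor $A_0^{-1}(\theta)F^{-1}(\theta)$ gives \eqref{eq:pe}, and recognizing the definitions \eqref{eq:pnpWw}--\eqref{eq:pnpWr} of $W_w(q,\theta)$ and $W_r(q,\theta)$ gives \eqref{eq:peW}.

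For the final statement I would set $\theta=\theta^0$, so that $A(\theta^0)=A^0$, $B(\theta^0)=B^0$, $F(\theta^0)=F^0$, and invoke the data generating system \eqref{eq:ABF_DT}, which yields $A^0(\q)w(t) - B^0(\q)r(t) = F^0(q)e(t)$. Substituting this bracket into \eqref{eq:pe} evaluated at $\theta^0$ gives $\bar\varepsilon(t,\theta^0) = A_0^{-1}(\theta^0)\,F^{-1}(\theta^0)\,F^0(q)\,e(t) = A_0^{-1}(\theta^0)e(t)$, which is precisely the innovation $\bar e(t)$ of Proposition \ref{prop:in}, equation \eqref{eq:in}.

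There is no genuine obstacle here; the proof is a short chain of transfer-function manipulations. The only points worth a remark are that $A_0^{-1}(\theta)$ is well defined for every $M(\theta)\in\mathcal{M}$ — which holds because each such model satisfies Definition \ref{def:idsetup}, so Lemma \ref{lem:rank} applies and $\text{rank}\big(A_0(\theta)\big)=L$ — and that $F^{-1}(q,\theta)$ exists because $F\in\mathcal{H}$ is stably invertible; these facts guarantee that all the indicated operations are legitimate and that $\bar\varepsilon(t,\theta)$ is a well-defined quasi-stationary signal.
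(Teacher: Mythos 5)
Your proposal is correct and follows essentially the same route as the paper: substitute the parametrized predictor into the definition, cancel and factor to get \eqref{eq:pe}--\eqref{eq:peW}, and then use the data generating equation at $\theta=\theta^0$ to recover the innovation $(A_0^0)^{-1}e(t)$. The only difference is cosmetic: the paper additionally rewrites $\bar\varepsilon(t,\theta)$ for general $\theta$ in terms of $r(t)$ and $e(t)$ (equation \eqref{eq:pere}) before specializing to $\theta^0$, a decomposition it needs later in the consistency proof but which is not required for the proposition itself; your remarks on the invertibility of $A_0(\theta)$ and $F(q,\theta)$ are a welcome addition.
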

\begin{proof}
   The expression for the parametrized prediction error \eqref{eq:pe} directly follows from its definition \eqref{eq:ped} and the network predictor \eqref{eq:pnp}. Expressing the parametrized prediction error \eqref{eq:pe} in terms of $r(t)$ and $e(t)$ yields
  \begin{equation}\label{eq:pere}
    \bar{\varepsilon}(t,\theta) = W_{\bar{\varepsilon}r}(q,\theta)r(t) + W_{\bar{\varepsilon}e}(q,\theta)e(t) + (A_0^0)^{-1}e(t),
  \end{equation}
  with (omitting argument $q$)
  \begin{align*}
    W_{\bar{\varepsilon}r}(q,\theta) &= A_0^{-1}(\theta) F^{-1}(\theta) \left[ A(\theta) (A^0)^{-1} B^0 - B(\theta) \right], \\
    W_{\bar{\varepsilon}e}(q,\theta) &= A_0^{-1}(\theta) F^{-1}(\theta) A(\theta) (A^0)^{-1} F^0 - (A_0^0)^{-1}.
  \end{align*}
  The latter two terms in \eqref{eq:pere} are uncorrelated since $e(t)$ is white noise and $W_{\bar{\varepsilon}e}(q,\theta)$ is strictly proper. If the true system is in the model set, the prediction error for the true system is equal to the innovation \eqref{eq:in}:
  \begin{equation}\label{eq:pein}
    \bar{\varepsilon}(t,\theta^0) = (A_0^0)^{-1} e(t) = \bar{e}(t).
  \end{equation}
\end{proof}

\subsection{Identification criterion}\label{ssec:idcrit}
In order to estimate the parameters, a weighted least squares identification criterion is applied:
\begin{align}
  \tn =& \arg \min_{\theta} V_N(\theta), \label{eq:crit} \\
  V_N(\theta) :=& \frac{1}{N} \sum_{t=1}^{N} \bar{\varepsilon}^{\top}(t,\theta) S \bar{\varepsilon}(t,\theta), \label{eq:cost}\\
  \bar{\Lambda}(\tn) :=& \frac{1}{N} \sum_{t=1}^{N} \bar{\varepsilon}(t,\tn) \bar{\varepsilon}^{\top}(t,\tn), \label{eq:cov}
\end{align}
with weight $S\succ0$ that has to be chosen by the user. It is a standard result in prediction error identification (Theorem 8.2 in \cite{LJUNG1999}), that under uniform stability conditions on the parametrized model set, \eqref{eq:crit} converges with probability $1$ to
\begin{align}
\theta^{\ast} := & \arg \min_{\theta} \bar{V}(\theta), \label{eq:critconv} \\
  \mbox{with}\ \ \bar{V}(\theta) :=& \Eb \left\{\bar{\varepsilon}^{\top}(t,\theta) S \bar{\varepsilon}(t,\theta)\right\}. \label{eq:costconv}
\end{align}

\section{Consistent identification} \label{sec:fullid}
In order to consistently identify the network, the experimental data need to satisfy certain conditions. These conditions are referred to as data informativity conditions. In addition, the network itself needs to satisfy certain conditions, such that it can be uniquely recovered. These conditions are referred to as network identifiability conditions. This section describes these conditions, after which the results for consistent network identification can be formulated.

The network \eqref{eq:ABF_DT} can be represented as
\begin{equation}\label{eq:TF}
  w(t) = T_{wr}(q) r(t) + \bar{v}(t),\quad \bar{v}(t) = T_{w\bar{e}}(q)\bar{e}(t),
\end{equation}
where $\bar{e}(t)$ is the innovation \eqref{eq:in} and
\begin{align}
  T_{wr}(q) &= A^{-1}(\q)B(\q), \label{eq:Twr}\\
  T_{w\bar{e}}(q) &= A^{-1}(\q) F(q) A_0 . \label{eq:Twe}
\end{align}
For estimating a network model, prediction error identification methods typically use the second order statistical properties of the measured data, which are represented by the spectral densities of $w(t)$ and $r(t)$. As $r(t)$ is measured, but $\bar{e}(t)$ is not, the second order properties of $w(t)$ are generated by transfer function $T_{wr}(q)$ and spectral density
\begin{align}
  \Phi_{\bar{v}}(\omega):&=\mathcal{F}\{\E[\bar{v}(t)\bar{v}^{\top}(t-\tau)]\}, \label{eq:Phiv}\\
  &=T_{w\bar{e}}(e^{i\omega}) \bar{\Lambda} T_{w\bar{e}}^{\ast}(e^{i\omega}), \label{eq:Phiv2}
\end{align}
with $\mathcal{F}$ the discrete-time Fourier transform and $(\cdot)^{\ast}$ the complex conjugate transpose. Observe that the spectral factorization in \eqref{eq:Phiv2} is unique, as $T_{w\bar{e}}(q)\in\mathcal{H}$ and $\bar{\Lambda}\succ0$ \cite{YOULA1961}.

\subsection{Data informativity}
The data are called informative if they contain sufficient information to uniquely recover the predictor filter $W(q,\theta)$ from second order statistical properties of the data $z(t)$. This can be formalized in line with \cite{LJUNG1999} as follows.
\begin{definition}[Data informativity]\label{def:inf}
 A quasi-stationary data sequence $\{z(t)\}$ is called informative with respect to the model set $\mathcal{M}$ \eqref{eq:modelset} if for any two $\theta_1,\theta_2\in\Theta$
   \begin{multline}\label{eq:inf}
    \Eb\left\{\| [W(q,\theta_1)-W(q,\theta_2)]z(t)\|^2 \right\}=0 \\\Rightarrow \{W(e^{i\omega},\theta_1) = W(e^{i\omega},\theta_2)\}
  \end{multline}
  for almost all $\omega$. \QED
\end{definition}
Applying this definition to physical networks, leads to the following conditions for data informativity.
\begin{proposition}[Data informativity]\label{prop:inf}
  The quasi-stationary data sequence $z(t)$ is informative with respect to the model set $\mathcal{M}$ if $\Phi_z(\omega)\succ0$ for a sufficiently high number of frequencies\footnote{The number of frequencies for which $\Phi_z(\omega) \succ 0$ is required, is dependent on the number of parameters in $\mathcal{M}$.}. In the situation $K\geq1$, this is guaranteed by $\Phi_r(\omega)\succ0$ for a sufficiently high number of frequencies.
\end{proposition}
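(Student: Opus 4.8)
The plan is to show that the condition $\Phi_z(\omega)\succ0$ for sufficiently many frequencies implies the implication in Definition~\ref{def:inf}, and then to argue that in the case $K\geq1$ a positive-definite input spectrum $\Phi_r(\omega)$ is enough to guarantee $\Phi_z(\omega)\succ0$. First I would fix two parameter vectors $\theta_1,\theta_2\in\Theta$ and denote $\Delta W(q):=W(q,\theta_1)-W(q,\theta_2)$. Using the definition of $\Eb$ and Parseval's relation for quasi-stationary signals (as in \cite{LJUNG1999}), the left-hand side of \eqref{eq:inf} can be written as an integral
\begin{equation*}
  \frac{1}{2\pi}\int_{-\pi}^{\pi} \mathrm{tr}\!\left[\Delta W(e^{i\omega})\,\Phi_z(\omega)\,\Delta W^{\ast}(e^{i\omega})\right] d\omega = 0 .
\end{equation*}
Since $\Phi_z(\omega)\succeq0$ for all $\omega$, the integrand is nonnegative, so the integral vanishes only if $\Delta W(e^{i\omega})\Phi_z(\omega)\Delta W^{\ast}(e^{i\omega})=0$ at almost every $\omega$; equivalently $\Phi_z(\omega)^{1/2}\Delta W^{\ast}(e^{i\omega})=0$, i.e. the rows of $\Delta W(e^{i\omega})$ lie in the kernel of $\Phi_z(\omega)$ for almost all $\omega$. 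At the frequencies where $\Phi_z(\omega)\succ0$ this forces $\Delta W(e^{i\omega})=0$. The entries of $\Delta W$ are rational functions of $e^{i\omega}$ (ratios of polynomials in the parameters), so a rational function that vanishes on a set of frequencies whose cardinality exceeds the total number of its coefficients must be identically zero; this is exactly the role of the footnote about "a sufficiently high number of frequencies". Hence $W(e^{i\omega},\theta_1)=W(e^{i\omega},\theta_2)$ for almost all $\omega$, which is the desired implication, so the data are informative.

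For the second claim, in the case $K\geq1$ I would use the data representation $z(t)=\begin{bmatrix}w(t)\\ r(t)\end{bmatrix}$ together with \eqref{eq:TF}, so that $w(t)=T_{wr}(q)r(t)+\bar v(t)$ with $\bar v(t)$ the filtered innovation, uncorrelated with the deterministic/stochastic part of $r(t)$ in the sense needed for spectra to add. Writing the joint spectrum in block form,
\begin{equation*}
  \Phi_z(\omega)=\begin{bmatrix} T_{wr}\Phi_r T_{wr}^{\ast}+\Phi_{\bar v} & T_{wr}\Phi_r \\ \Phi_r T_{wr}^{\ast} & \Phi_r \end{bmatrix},
\end{equation*}
and taking the Schur complement with respect to the $(2,2)$ block $\Phi_r(\omega)$, the Schur complement equals $\Phi_{\bar v}(\omega)=T_{w\bar e}(e^{i\omega})\bar\Lambda T_{w\bar e}^{\ast}(e^{i\omega})$. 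By Proposition~\ref{prop:in} and the properties collected in Definition~\ref{def:idsetup}, $\bar\Lambda=A_0^{-1}\Lambda A_0^{-1}\succ0$ and $T_{w\bar e}(q)=A^{-1}(\q)F(q)A_0$ is square and nonsingular (its inverse is proper since $\mathrm{rank}(A_0)=L$ by Lemma~\ref{lem:rank} and $F,A^{-1}$ are proper and invertible), so $\Phi_{\bar v}(\omega)\succ0$ for all $\omega$. Therefore, at any frequency where $\Phi_r(\omega)\succ0$, positivity of both the $(2,2)$ block and its Schur complement yields $\Phi_z(\omega)\succ0$. Choosing an input whose spectrum is positive definite at sufficiently many frequencies then supplies sufficiently many frequencies with $\Phi_z(\omega)\succ0$, completing the argument.

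The main obstacle I anticipate is the rational-function vanishing argument that turns "zero on finitely many frequencies" into "identically zero": it requires that the number of distinct frequencies with $\Phi_z(\omega)\succ0$ genuinely exceed the number of free coefficients in $\Delta W$, which is why the statement is deliberately phrased with the footnote rather than a fixed bound. A secondary care point is justifying the additive decomposition $\Phi_w=T_{wr}\Phi_r T_{wr}^{\ast}+\Phi_{\bar v}$ and the cross-spectrum blocks in the quasi-stationary setting, i.e. that the deterministic $r(t)$ and the stochastic $\bar v(t)$ contribute additively and without cross terms to $\Phi_z(\omega)$; this follows from the bilinear-form definition of $\Phi$ for quasi-stationary signals in \cite{LJUNG1999}, but it is the place where one must be slightly careful rather than invoke a purely stochastic argument.
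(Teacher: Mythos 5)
Your proof is correct and follows essentially the same route as the paper: Parseval's theorem reduces the premise of \eqref{eq:inf} to the a.e.\ vanishing of $\Delta W(e^{i\omega})\Phi_z(\omega)\Delta W^{\ast}(e^{i\omega})$, and positive definiteness of $\Phi_z(\omega)$ at sufficiently many frequencies then forces $\Delta W\equiv 0$ (you spell out the rational-function vanishing step that the paper leaves implicit in the phrase ``sufficiently high number of frequencies''). For $K\geq1$ the paper instead factors $z(t)=J(q)\kappa(t)$ with $\kappa=\begin{bmatrix}e^{\top}&r^{\top}\end{bmatrix}^{\top}$ and uses nonsingularity of $J$ together with $\Phi_{\kappa}=\mathrm{diag}(\Phi_r,\Lambda)$, which is precisely the congruence underlying your Schur-complement computation, since the Schur complement of $\Phi_r$ in $\Phi_z$ equals $\Phi_{\bar v}=T_{w\bar e}\bar\Lambda T_{w\bar e}^{\ast}\succ0$; the two arguments are equivalent.
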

\begin{proof}
  The premise of implication \eqref{eq:inf} is satisfied if and only if
  $\Delta_W(q,\theta):=W(q,\theta_1)-W(q,\theta_2)=0$. Applying Parseval's theorem gives
  \begin{equation*}
    \frac{1}{2\pi} \int_{-\pi}^{\pi} \Delta_{W}(e^{i\omega},\theta) \Phi_z(\omega) \Delta_{W}^{\top}(e^{-i\omega},\theta) d\omega = 0.
  \end{equation*}
  This implies $\Delta_{W}(q,\theta)=0$ only if $\Phi_z(\omega)\succ0$ for a sufficiently high number of frequencies. In the situation $K\geq1$, $w(t)$ depends on $r(t)$ and substituting the open-loop response \eqref{eq:ABF_DT} for $w(t)$ gives
  \begin{equation*}
    z(t) = J(q) \kappa(t).
  \end{equation*}
  with
 \begin{equation*}
    J(q) = \begin{bmatrix}A^{-1}F& A^{-1}B\\0&I\end{bmatrix}, \qquad
    \kappa(t) = \begin{bmatrix}e(t)\\r(t)\end{bmatrix}.
  \end{equation*}
  As $J(q)$ has always full rank, $\Phi_z(\omega)\succ0$ if and only if $\Phi_{\kappa}(\omega)\succ0$. As $e(t)$ and $r(t)$ are assumed to be uncorrelated and $E\{e(t)\}=0$, we have that $\Phi_{re}=\Phi_{er}=0$ and
  \begin{equation*}
    \Phi_{\kappa} = \begin{bmatrix}\Phi_r&\Phi_{re}\\ \Phi_{er}&\Phi_e\end{bmatrix} = \begin{bmatrix}\Phi_r&0\\0& \Lambda\end{bmatrix}.
  \end{equation*}
  Then $\Phi_{\kappa}(\omega)\succ0$ if and only if $\Lambda\succ0$ (which is assumed) and $\Phi_r(\omega)\succ0$. The condition $\Phi_z(\omega)\succ0$ reduces to $\Phi_r(\omega)\succ0$.
\end{proof}
The condition that $\Phi_r(\omega)\succ0$ for a sufficiently high number of frequencies seems to be a general condition. However, observe that the dimensions of $\Phi_r(\omega)$ depend on the number of excitation signals $r(t)$, denoted by $K$, which is specified in the model set. Thus all excitation signals $r(t)$ that are present (according to the model set), need to be persistently exciting. This is because each additional excitation signal $r_j(t)$ also introduces new polynomials $b_{kj}(\q)$ that need to be identified.

Informativity of $z(t)$ implies that $W(q)$ can uniquely be recovered from data, which by \eqref{eq:pnpWw}-\eqref{eq:pnpWr} and \eqref{eq:TF}-\eqref{eq:Twe} implies that the pair $(T_{wr}(q),\Phi_{\bar v}(\omega))$ can uniquely be recovered from data.

\subsection{Network identifiability}
The concept of network identifiability has been defined for general linear dynamic networks in \cite{WEERTS2018_A1} as follows:
\begin{definition}[Network identifiability]\label{def:idf}
  The network model set $\mathcal{M}$ \eqref{eq:modelset} is globally network identifiable from $z(t)$ if the parametrized model $M(\theta)$ can uniquely be recovered from $T_{wr}(q,\theta)$ and $\Phi_{\bar{v}}(\omega,\theta)$, that is if for all models $M(\theta_1), M(\theta_2) \in \mathcal{M}$
  \begin{equation}\label{eq:idf}\begin{rcases*}
    T_{wr}(q,\theta_1) = T_{wr}(q,\theta_2) \\
    \Phi_{\bar{v}}(\omega,\theta_1) = \Phi_{\bar{v}}(\omega,\theta_2)
    \end{rcases*} \Rightarrow M(\theta_1) = M(\theta_2).
    \end{equation} \QED
\end{definition}
Whereas the original definition has been applied to network models with all transfer function elements, here we apply it to our choice of models \eqref{eq:model}, where through the particular parametrization of the polynomials $A(q,\theta)$ and $B(q,\theta)$, equality of models implies equality of the physical parameters in these polynomial matrices. Before formulating the conditions for identifiability of our particular networks, a result on left matrix fraction descriptions is presented.
\begin{lemma}[Left matrix fraction description (LMFD)]\label{lem:LMFD}
  Consider a network as defined in Definition \ref{def:idsetup}. The LMFD $A(\q)^{-1}B(\q)$ is unique up to a scalar factor if the following conditions are satisfied:
  \begin{enumerate}
    \item The polynomials $A(\q)$ and $B(\q)$ are left coprime.
    \item At least one of the matrices in the set $\{A_k, k = 0, \cdots n_a; B_{\ell}, \ell = 0, \cdots n_b \}$ is diagonal and full rank.
  \end{enumerate}
\end{lemma}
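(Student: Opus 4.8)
The plan is to prove uniqueness of the LMFD $A(\q)^{-1}B(\q)$ up to a scalar by the standard argument for matrix fraction descriptions, specialized to the structure at hand. Suppose $A_1(\q)^{-1}B_1(\q) = A_2(\q)^{-1}B_2(\q)$ for two pairs satisfying conditions (1) and (2). Then $A_2(\q)A_1(\q)^{-1} = B_2(\q)B_1(\q)^{-1}$, and because $(A_1,B_1)$ are left coprime (condition 1) this common value is a polynomial matrix $U(\q)$; symmetrically, using left coprimeness of $(A_2,B_2)$ on $A_1(\q)A_2(\q)^{-1} = B_1(\q)B_2(\q)^{-1}$, its inverse $U(\q)^{-1}$ is also polynomial. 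Hence $U(\q)$ is unimodular and $A_2(\q) = U(\q)A_1(\q)$, $B_2(\q)=U(\q)B_1(\q)$; it remains to show $U(\q)$ is a (scalar) constant.

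To pin down $U(\q)$, I would invoke condition (2): some matrix $G$ among the coefficients $\{A_k\} \cup \{B_\ell\}$ is diagonal and of full rank $L$. Writing the relation $A_2 = U A_1$ (or $B_2 = U B_1$) at the level of coefficients, the highest-degree coefficient of the relevant polynomial matrix on each side couples $U(\q)$ to $G$. The key observation is that this full-rank diagonal coefficient appears on \emph{both} sides of the factorization with the same structural role, and since $G$ is invertible one can solve for $U(\q)$ in terms of products of $G$ with the corresponding coefficient from the other pair. Matching degrees in $\q$ forces $U(\q)$ to have degree zero, i.e. $U(\q) = U_0$ a constant invertible matrix; then diagonality of $G$ (equivalently, of $G^{-1}$) combined with the fact that $U_0 G$ and $G$ must both be diagonal — or alternatively the symmetry $a_{jk}=a_{kj}$ preserved on both sides — forces $U_0$ to be a scalar multiple of the identity, $U_0 = c I$. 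I would carry this out by: (i) establishing $U(\q)$ unimodular as above; (ii) using the full-rank diagonal coefficient to bound $\deg U = 0$; (iii) using the remaining structure (diagonality of the distinguished coefficient, and/or the symmetry constraint on $A$) to collapse $U_0$ to $cI$.

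The main obstacle I anticipate is step (iii): showing that a constant invertible $U_0$ that preserves the structural constraints (diagonality of the distinguished coefficient, symmetry of $A$) must be $cI$ rather than a more general invertible matrix. If the distinguished matrix is one of the $A_k$, then $U_0 A_{1,k} = A_{2,k}$ with both sides diagonal and $A_{1,k}$ invertible diagonal forces $U_0$ diagonal, but \emph{a priori} with distinct diagonal entries; ruling out distinct entries requires an additional argument — most naturally, that $U_0^{-1} A_2(\q) U_0^{-\top}\!$-type symmetry of $A_1(\q)$ (inherited from $a_{jk}=a_{kj}$) together with $A_2$ symmetric forces $U_0 U_0^\top \propto I$, hence $U_0$ orthogonal and diagonal, hence $\pm$ a permutation-free sign pattern, and finally connectedness of the network (the Laplacian part of $A_0$ has off-diagonal entries linking all nodes) propagates a single sign, giving $U_0 = cI$. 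I would also need to handle the case where the distinguished matrix is one of the $B_\ell$ rather than an $A_k$, which is slightly different because $B$ is $L\times K$ and need not be square; there, full rank of a diagonal $B_\ell$ presupposes $K = L$, and the same degree/structure matching applies to $U_0 B_{1,\ell} = B_{2,\ell}$. Handling these cases uniformly, and making precise exactly how much of the "scalar factor" freedom survives, is where the care is required.
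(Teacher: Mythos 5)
Your overall strategy coincides with the paper's: left coprimeness gives uniqueness of the pair $(A,B)$ up to a unimodular polynomial factor $U(\q)$ (the paper simply cites Kailath for this step), the distinguished diagonal full-rank coefficient forces $U$ to be diagonal, and the symmetry of $A(\q)$ collapses the diagonal to a scalar multiple of the identity. The paper's proof is only three sentences and asserts the last two steps without detail, so your more explicit attempt is in the right spirit; you also correctly notice two things the paper glosses over, namely that connectedness of the network is what propagates ``equal diagonal entries of $U_0$'' across all nodes, and that a diagonal full-rank $B_\ell$ presupposes $K=L$.

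Two of your intermediate steps are shaky, however. First, the claim that ``matching degrees in $\q$ forces $\deg U=0$'' requires the \emph{leading} coefficient $A_{1,n_a}$ to be nonsingular: the coefficient of $q^{-(\deg U+n_a)}$ in $U(\q)A_1(\q)$ is $U_{\deg U}A_{1,n_a}$, which can vanish with $U_{\deg U}\neq 0$ if $A_{1,n_a}$ is singular. Condition 2 only guarantees that \emph{some} coefficient in $\{A_k\}\cup\{B_\ell\}$ is diagonal and full rank (by Lemma \ref{lem:rank} it is $A_0$ that is always full rank, and $A_0$ need not be diagonal nor $A_{n_a}$ invertible), so the degree bound does not follow from the stated hypotheses. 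The paper sidesteps this by ordering the argument differently: once $U(\q)$ is known to be diagonal, unimodularity alone makes it constant, because a diagonal polynomial matrix with constant nonzero determinant must have unit (hence constant) diagonal entries. Second, the detour through $U_0U_0^{\top}\propto I$ and orthogonality in your step (iii) does not follow from the one-sided relation $A_2(\q)=U_0A_1(\q)$; the clean argument is that symmetry of both $A_1$ and $A_2$ gives $U_0A_1(\q)=A_1(\q)U_0^{\top}=A_1(\q)U_0$ for diagonal $U_0=\mathrm{diag}(c_1,\dots,c_L)$, hence $(c_i-c_j)a_{1,ij}(\q)=0$ for all $i,j$, and connectedness forces all $c_i$ to be equal. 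With those two repairs your proof lands exactly where the paper's does.
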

\begin{proof}
  According to \cite{KAILATH1980}, the LMFD of any two polynomial and left coprime matrices is unique up to a premultiplication with a unimodular matrix. To preserve diagonality of $A_k$ or $B_{\ell}$, the unimodular matrix is restricted to be diagonal. To preserve symmetry of $A(\q)$, this diagonal matrix is further restricted to have equal elements.
\end{proof}
In general polynomial models, like ARMAX \cite{DEISTLER1983}, $A(\q)$ is monic and therefore $A_0=I$ is diagonal. Then the LMFD $A(\q)^{-1}B(\q)$ is unique, as the conditions of Lemma \ref{lem:LMFD} are satisfied and scaling with a scalar factor is not possible anymore, since the diagonal elements of $A_0$ are equal to 1. Hence, both Condition 2 in Lemma \ref{lem:LMFD} and the scaling factor freedom are a result of the fact that $A(\q)$ is not necessarily monic.

Now the conditions for global network identifiability can be formulated.
\begin{proposition}[Network identifiability]\label{prop:idf}
  A network model set $\mathcal{M}$ \eqref{eq:modelset} is globally network identifiable from $z(t)$ if the following conditions are satisfied:
  \begin{enumerate}
    \item The polynomials $A(\q)$ and $B(\q)$ are left coprime.
    \item At least one of the matrices in the set $\{A_k, k = 0, \cdots n_a; B_{\ell}, \ell = 0, \cdots n_b  \}$ is diagonal and full rank.
    \item At least one excitation signal $r_j(t)$ is present: $K\geq1$.
    \item There is at least one constraint on the parameters of $A(\q,\theta_a)$ and $B(\q,\theta_a)$ of the form $\tilde{\Gamma}\tilde{\theta}=\gamma\neq0$, with $\tilde{\theta} := \begin{bmatrix}\theta_a^{\top}&\theta_b^{\top}\end{bmatrix}^{\top}$.
  \end{enumerate}
\end{proposition}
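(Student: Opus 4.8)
The plan is to verify Definition~\ref{def:idf} directly: I take arbitrary models $M(\theta_1),M(\theta_2)\in\mathcal{M}$ with $T_{wr}(q,\theta_1)=T_{wr}(q,\theta_2)$ and $\Phi_{\bar v}(\omega,\theta_1)=\Phi_{\bar v}(\omega,\theta_2)$, and show $M(\theta_1)=M(\theta_2)$. Throughout I would write $A_i:=A(\q,\theta_i)$, $B_i:=B(\q,\theta_i)$, $F_i:=F(q,\theta_i)$, $\Lambda_i:=\Lambda(\theta_i)$, and let $A_{i,0}$ denote the constant term of $A_i$, which is symmetric and, by Lemma~\ref{lem:rank}, invertible.

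First I would pin down $A$ and $B$ up to a scalar. Because $K\ge 1$ (Condition~3), the transfer matrix $T_{wr}=A_i^{-1}B_i$ is a genuine object, and $(A_i,B_i)$ is by construction a symmetric, structure-preserving LMFD of it; Conditions~1 and~2 are exactly the hypotheses of Lemma~\ref{lem:LMFD}. Hence $A_1^{-1}B_1=A_2^{-1}B_2$ forces $A_1=\alpha A_2$ and $B_1=\alpha B_2$ for one and the same scalar $\alpha\in\R\setminus\{0\}$, since the only common unimodular factor compatible with diagonality of one coefficient matrix and with symmetry of $A$ is $\alpha I$.

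Next I would use the noise spectrum to recover $F$ and $\Lambda$. By \eqref{eq:Twe} and Proposition~\ref{prop:in}, $\Phi_{\bar v}(\omega,\theta_i)=T_{w\bar e}(e^{i\omega},\theta_i)\,\bar\Lambda(\theta_i)\,T_{w\bar e}^{*}(e^{i\omega},\theta_i)$ with $T_{w\bar e}(q,\theta_i)=A_i^{-1}F_iA_{i,0}\in\mathcal{H}$ monic and $\bar\Lambda(\theta_i)=A_{i,0}^{-1}\Lambda_iA_{i,0}^{-1}\succ0$. Equating the two spectra and invoking uniqueness of the spectral factorization over $\mathcal{H}$ \cite{YOULA1961} gives $T_{w\bar e}(q,\theta_1)=T_{w\bar e}(q,\theta_2)$ and $\bar\Lambda(\theta_1)=\bar\Lambda(\theta_2)$. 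Substituting $A_1=\alpha A_2$, hence $A_{1,0}=\alpha A_{2,0}$, the scalar cancels inside $T_{w\bar e}$, so $F_1=F_2$; and $\bar\Lambda(\theta_1)=\bar\Lambda(\theta_2)$ reduces to $\Lambda_1=\alpha^2\Lambda_2$. Finally, scaling $A$ and $B$ by $\alpha$ scales the corresponding coefficient vector, so $\tilde\theta_1=\alpha\tilde\theta_2$ with $\tilde\theta:=\begin{bmatrix}\theta_a^\top&\theta_b^\top\end{bmatrix}^\top$; applying Condition~4 to both models then gives $\gamma=\tilde\Gamma\tilde\theta_1=\alpha\,\tilde\Gamma\tilde\theta_2=\alpha\gamma$, and since $\gamma\ne0$ this forces $\alpha=1$. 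Then $A_1=A_2$, $B_1=B_2$, $F_1=F_2$, $\Lambda_1=\Lambda_2$, i.e.\ $M(\theta_1)=M(\theta_2)$, as required.

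The main obstacle I anticipate is the spectral-factorization step: one must apply the uniqueness result to the monic representative $T_{w\bar e}=A^{-1}FA_0\in\mathcal{H}$ rather than to $A^{-1}F$, which is not monic owing to the $A_0$ term, and then check carefully that the scalar $\alpha$ cancels inside $T_{w\bar e}$ while it survives — squared — inside $\bar\Lambda$. The rest is bookkeeping: Lemma~\ref{lem:LMFD} does the heavy lifting for $A$ and $B$, and Condition~4 is a one-line calibration that removes the last remaining degree of freedom.
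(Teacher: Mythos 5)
Your proposal is correct and follows essentially the same route as the paper's own proof: Lemma~\ref{lem:LMFD} under Conditions~1--2 pins down $A(\q,\theta)$ and $B(\q,\theta)$ up to the scalar $\alpha$, uniqueness of the spectral factorization of $\Phi_{\bar v}$ over $\mathcal{H}$ recovers $T_{w\bar e}$ and $\bar\Lambda$ (hence $F$ exactly and $\Lambda$ up to $\alpha^2$), and Condition~4 fixes $\alpha=1$. Your write-up is in fact slightly more explicit than the paper's on two points it leaves implicit --- why $\alpha$ cancels inside $T_{w\bar e}=A^{-1}FA_0$ and how the affine constraint $\tilde\Gamma\tilde\theta=\gamma\neq0$ forces $\alpha=1$ --- but the underlying argument is the same.
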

\begin{proof}
  Condition 3 implies that $T_{wr}(q,\theta)$ is nonzero. According to Lemma \ref{lem:LMFD}, Condition 1 and 2 imply that $A(\q,\theta)$ and $B(\q,\theta)$ are found up to a scalar factor $\alpha$. $T_{w\bar{e}}(q,\theta)$ and $\bar{\Lambda}(\theta)$ are uniquely recovered from $\Phi_{\bar{v}}(\omega,\theta)$ as $T_{w\bar{e}}(q)\in\mathcal{H}$ and $\bar{\Lambda}\succ0$ \cite{YOULA1961}. Together with the fact that $A(\q,\theta)$ is found up to a scalar factor $\alpha$, $T_{w\bar{e}}(q,\theta)$ gives a unique $F(q,\theta)$, and $\bar{\Lambda}(\theta)$ gives $\Lambda(\theta)$ up to a scalar factor $\alpha^2$.
  Finally, Condition~4 implies that the parameters cannot be scaled anymore and therefore $\alpha$ is fixed.
\end{proof}
The coprimeness of $A(\q)$ and $B(\q)$ ensures that $A(\q)$ and $B(\q)$ have no common factors. This condition is also necessary for global identifiability of typical polynomial model structures, see Theorem 4.1 of \cite{LJUNG1999}. The parameter $\alpha$ is a scaling factor that is introduced by the nonmonicity of $A(\q)$. The scaling factor needs to be fixed by additional constraints induced by Condition 2 and Condition 4 in Proposition \ref{prop:idf}. The parameter constraint in Condition 4 of Proposition \ref{prop:idf} can for example be
\begin{itemize}
  \item One nonzero element in $B(\q,\theta)$ is known, i.e. one excitation signal enters a node through known dynamics.
  \item One nonzero parameter is known.
  \item The fraction of two nonzero parameters is known.
  \item The sum of some nonzero parameters is known.
\end{itemize}

\begin{remark}
In general dynamic networks conditions for global network identifiability typically include algebraic conditions verifying the rank of particular transfer functions from external signals to internal node signals \cite{WEERTS2018_A1}. For the generic version of network identifiability this entails a related graph-based check on vertex disjoint paths in the network model \cite{HENDRICKX2019,CHENG2022}. In contrast to these conditions, the current conditions in Proposition \ref{prop:idf} are very simple and require only a single excitation signal $r(t)$ to be present in the network. This is induced by the structural properties of the diffusive couplings between the nodes, reflected by the fact that the polynomial matrix $A(\q)$ is restricted to be symmetric. \QED
\end{remark}

\subsection{Consistency}
Now, we can formulate the consistency result as follows.
\begin{theorem}[Consistency]\label{th:con}
  Consider a data generating system $\mathcal{S}$ as defined in Definition \ref{def:dgn} and a model set $\mathcal{M}$ in which the predictor filter \eqref{eq:pf} is uniformly stable. Then $M(\tn)$ is a consistent estimate of $M^0$ if the following conditions hold:
  \begin{enumerate}
    \item The true system is in the model set ($\mathcal{S}\in\mathcal{M}$).
    \item The data are informative with respect to the model set.
    \item The model set is globally network identifiable.
  \end{enumerate}
\end{theorem}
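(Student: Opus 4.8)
The plan is to follow the standard three-ingredient argument for consistency of prediction error methods (cf. Theorem 8.3 in \cite{LJUNG1999}), adapted to the structured polynomial model set considered here. First I would invoke the convergence result already quoted in Section~\ref{ssec:idcrit}: under the uniform stability of the predictor filter \eqref{eq:pf} on the model set, the estimate $\tn$ from \eqref{eq:crit} converges with probability $1$ to the minimizing set $\theta^{\ast}$ of the averaged criterion $\bar V(\theta)$ in \eqref{eq:critconv}--\eqref{eq:costconv}. Hence it suffices to show that every $\theta^{\ast}$ in this set satisfies $M(\theta^{\ast}) = M^0$.

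Next I would characterize the minimizers of $\bar V(\theta)$. Using the decomposition \eqref{eq:pere} of the prediction error into $W_{\bar\varepsilon r}(q,\theta)r(t) + W_{\bar\varepsilon e}(q,\theta)e(t) + (A_0^0)^{-1}e(t)$, where the $r$-term is deterministic, the $e$-terms are orthogonal to it, and $W_{\bar\varepsilon e}(q,\theta)$ is strictly proper (so the cross term with $(A_0^0)^{-1}e(t)$ vanishes), the cost $\bar V(\theta) = \Eb\{\bar\varepsilon^{\top}(t,\theta)S\bar\varepsilon(t,\theta)\}$ splits into a sum of three nonnegative quadratic contributions. Since $\mathcal{S}\in\mathcal{M}$ (Condition~1), the true parameter $\theta^0$ makes the first two contributions vanish, leaving only the irreducible term $\Eb\{e^{\top}(t)(A_0^0)^{-\top}S(A_0^0)^{-1}e(t)\}$; thus $\bar V(\theta^0)$ attains this lower bound. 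Conversely, any $\theta^{\ast}$ achieving the minimum must force $W_{\bar\varepsilon r}(q,\theta^{\ast})r(t) \equiv 0$ and $W_{\bar\varepsilon e}(q,\theta^{\ast})e(t) \equiv 0$ in the $\Eb$-norm; here Condition~2 (informativity, i.e. $\Phi_r(\omega)\succ0$ on sufficiently many frequencies, by Proposition~\ref{prop:inf}) together with $\Lambda\succ0$ upgrades this to the identities $W_{\bar\varepsilon r}(e^{i\omega},\theta^{\ast})=0$ and $W_{\bar\varepsilon e}(e^{i\omega},\theta^{\ast})=0$ for almost all $\omega$. Unwinding the definitions of $W_{\bar\varepsilon r}$ and $W_{\bar\varepsilon e}$ from Proposition~\ref{prop:pe} (and using that $A_0^{-1}(\theta^{\ast})F^{-1}(q,\theta^{\ast})A(q,\theta^{\ast})$ is invertible, so it can be cancelled), these two equalities are precisely $T_{wr}(q,\theta^{\ast}) = A^{-1}(q,\theta^{\ast})B(q,\theta^{\ast}) = (A^0)^{-1}B^0 = T_{wr}(q,\theta^0)$ and $T_{w\bar e}(q,\theta^{\ast}) = A^{-1}(q,\theta^{\ast})F(q,\theta^{\ast})A_0(\theta^{\ast}) = (A^0)^{-1}F^0A_0^0 = T_{w\bar e}(q,\theta^0)$, and consequently $\Phi_{\bar v}(\omega,\theta^{\ast}) = \Phi_{\bar v}(\omega,\theta^0)$ via \eqref{eq:Phiv2} (noting $\bar\Lambda(\theta^{\ast}) = A_0^{-1}(\theta^{\ast})\Lambda(\theta^{\ast})A_0^{-\top}(\theta^{\ast})$ is pinned down once $T_{w\bar e}$ is, by uniqueness of spectral factorization \cite{YOULA1961}).

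Finally, I would apply Condition~3 (global network identifiability, Proposition~\ref{prop:idf}): since $(T_{wr}(q,\theta^{\ast}),\Phi_{\bar v}(\omega,\theta^{\ast}))$ coincides with $(T_{wr}(q,\theta^0),\Phi_{\bar v}(\omega,\theta^0))$, Definition~\ref{def:idf} yields $M(\theta^{\ast}) = M(\theta^0) = M^0$. As this holds for every minimizer, the convergence point set reduces to the single model $M^0$, which is the claimed consistency of $M(\tn)$.

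The main obstacle I anticipate is the step that passes from the integrated/averaged statements $\Eb\{\|W_{\bar\varepsilon r}(q,\theta^{\ast})r(t)\|^2\}=0$ and $\Eb\{\|W_{\bar\varepsilon e}(q,\theta^{\ast})e(t)\|^2\}=0$ to the pointwise-in-frequency transfer-function identities, and then to cancelling the common invertible factor to expose $T_{wr}$ and $T_{w\bar e}$ cleanly. This is where informativity is genuinely used (via Parseval, as in the proof of Proposition~\ref{prop:inf}) and where the non-monicity of $A(q,\theta)$ must be handled with care, since $A_0(\theta^{\ast})$ appears in $T_{w\bar e}$ and one must keep track of it rather than absorbing it. The rest is bookkeeping: orthogonality of the noise and exogenous contributions, positivity of $S$, and a direct appeal to the already-established identifiability proposition.
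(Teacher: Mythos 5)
Your proposal is correct and follows essentially the same three-step route as the paper's proof: convergence of $V_N$ to $\bar V$ under uniform stability, identification of the minimizers of $\bar V$ via the orthogonal decomposition \eqref{eq:pere} of the prediction error, and then informativity to pin down the predictor filter (hence the pair $(T_{wr},\Phi_{\bar v})$) followed by global network identifiability to recover $M^0$. The only difference is one of detail: where the paper cites Theorem 8.3 of \cite{LJUNG1999} for the uniqueness step, you unwind $W_{\bar\varepsilon r}=0$ and $W_{\bar\varepsilon e}=0$ explicitly to the transfer-function identities, which is the same argument spelled out.
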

\begin{proof}
  The proof consists of three steps. First, convergence of $V_N(\theta)$ to $\bar{V}(\theta)$ for $N\rightarrow\infty$ follows directly from applying Theorem 2B.1 of \cite{LJUNG1999} and the fact that $S\succ0$ as the conditions for convergence are satisfied by the network model set. Second, by Condition 1, $\theta^0$ is a minimum of $\bar{V}(\theta)$, which can be seen as follows. As $r(t)$ and $e(t)$ are uncorrelated and $W_{\bar{\varepsilon}e}(q,\theta)$ is strictly proper, the power of any cross term between the three terms in the prediction error \eqref{eq:pere} is zero, so the power of each term can be minimized individually. As a result, $W_{\bar{\varepsilon}r}(q,\theta^0)=0$ and $W_{\bar{\varepsilon}e}(q,\theta^0)=0$ and thus the cost function reaches its minimum value when the prediction error is equal to the innovation as in \eqref{eq:pein}. Third, following the result of Theorem 8.3 in \cite{LJUNG1999}, under Condition 2, this minimum of $\bar{V}(\theta)$ at $\theta^0$ provides a unique predictor filter $W(q,\theta)$ and therefore also a unique pair $(T_{wr}(q),\Phi_{\bar v}(\omega))$. With Condition 3 this implies that the resulting model $M(\theta)=M(\theta^0)$ is unique. Therefore, $M(\tn)$ converges to $M(\theta^0)$ with probability 1.
\end{proof}
Observe that any weight $S\succ0$ leads to consistent estimates, but that minimum variance is only achieved for $S=\bar{\Lambda}^{-1}$.

Now it has been proved that diffusively coupled networks can be identified consistently, the next step is to formulate algorithms for obtaining these estimates.

\section{A multi-step algorithm} \label{sec:alg}
The parametrized prediction error \eqref{eq:pe} is not affine in the parameters $\theta$. Only in the very special situation where $F(q,\theta)=I$ and $A_0(\theta)=I$, the structure of \eqref{eq:pe} is affine. This situation causes the optimization problem \eqref{eq:crit} to be nonconvex. Especially for networks with many nodes, this results in high computational complexity and occurrences of local optima. One approach to reduce the problem is to solve multiple multi-input single-output (MISO) problems instead of one large multi-input multi-output (MIMO) problem \cite{VANDENHOF2013,DANKERS2016,GEVERS2018}. However, since the dynamics are coupled (that is, $A(\q)$ is symmetric and therefore its elements are not independently parametrized), a decomposition into MISO problems cannot be made without loss of accuracy.

In this section, as an alternative, a multi-step algorithm is developed, where in each step a quadratic problem is solved using a linear regression scheme. With that, the developed method contains steps that are similar to sequential least squares (SLS) \cite{WEERTS2018_I}, weighted null-space fitting (WNSF) \cite{GALRINHO2019}, and the multi-step least squares method in \cite{FONKEN2021}, but particularly tuned to the network model structure of the current paper.

In Step 1 an unstructured nonparametric ARX model is estimated from data. This ARX model is reduced to a structured parametric ARMAX network model in Step 2, which is improved in Step 3. Once the network dynamics have been estimated, the noise model is found in Step 4 and the discrete-time components are extracted in Step 5. Finally, the components are mapped back to the continuous-time domain in Step 6. The particular difference of our method with the aforementioned methods is that the structure in the parametrized objects in Step 2 and Step 3 is different and that the optimization problem in Step~2 is a constrained optimization problem. Furthermore, Steps 4, 5, and 6 have been added.

As only quadratic problems are solved, the optimizations are convex and have unique solutions. In this way, the formulated algorithm achieves a consistent parameter estimation with minimum variance and limited computation complexity. This makes the algorithm also applicable to networks with many nodes.

For the development of the algorithm we will restrict attention to the situation of an ARMAX-like model structure, where we consider a data generating system $\mathcal{S}=(A^0,B^0,F^0,\Lambda^0)$ with $F^0(q):=C^0(\q)$ being a monic polynomial, leading to the network equation
\begin{equation}\label{eq:armax}
     A^0(\q) w(t) = B^0(\q) r(t) + C^0(\q) e(t),
\end{equation}
which would have an ARMAX structure if $A^0(\q)$ would be monic. Multiplying both sides of \eqref{eq:armax} with $[C^0(\q)A_0^0]^{-1}$ leads to
\begin{equation}\label{eq:arxInf}
  \breve{A}^0(\q) w(t) = \breve{B}^0(\q) r(t) + \bar{e}(t),
\end{equation}
where $\breve{A}^0(\q)$ is monic, $\bar{e}(t)$ is the innovation \eqref{eq:in}, and
\begin{align}
  \breve{A}^0(\q) &= [\bar{C}^0(\q)]^{-1}A^0(\q),   \label{eq:relAC1}\\
  \breve{B}^0(\q) &= [\bar{C}^0(\q)]^{-1}B^0(\q),   \label{eq:relBC1}\\
  \bar{C}^0(\q) &= C^0(\q)A^0_0.                    \label{eq:relCC1}
\end{align}
Now consider the model structure $A(\q,\theta_a)$, $B(\q,\theta_b)$, and $\bar{C}(\q,\eta_c,\theta_a)$, as models of $A^0(\q)$, $B^0(\q)$, and $\bar{C}^0(\q)$, respectively, with $\bar{C}(\q,\eta_c,\theta_a) = C(\q,\theta_c)A_0(\theta_a)$, and with $\vartheta := \begin{bmatrix} \theta_a^{\top} & \theta_b^{\top} & \eta_c^{\top} \end{bmatrix}^{\top}$.As $C(\q)$ is monic and $A_0$ is constant, $\bar{C}_0=A_0$ and therefore parametrized as such. All other matrices $\bar{C}_{\ell}$ are independently parametrized with parameters $\eta_c$. The exact parametrization is given in Appendix~\ref{app:struct}.

\subsection*{Step 1: Estimating the nonparametric ARX model}
As a first step, we are going to estimate a nonparametric ARX model for \eqref{eq:arxInf}, by parametrizing the infinite series expansions $\breve{A}^0(\q)$ and $\breve{B}^0(\q)$ by high order polynomial (finite) expansions $\breve{A}(\q,\zeta^n)$ and $\breve{B}(\q,\zeta^n)$, according to
\begin{align}
  \bar\varepsilon_A(t,\zeta^n)
  &= \breve{A}(\q,\zeta^n)w(t) - \breve{B}(\q,\zeta^n)r(t), \label{eq:peARX1}\\
  &= w(t) - [\varphi^n(t)]^{\top}\zeta^n,                   \label{eq:peARX2}
\end{align}
with $n$ the finite order of the polynomials, which is typically chosen to be high. The parameter vector $\zeta^n$ and the matrix $[\varphi^n(t)]^{\top}$ are given in Appendix \ref{app:arx}. The nonparametric ARX model \eqref{eq:arxInf} is then estimated through estimating its parameters $\zeta^n$. As this step serves to make an initial estimate of the network, the network structure is not taken into account. Further, consistency of this step is only achieved if the order $n$ tends to infinity as function of the data length $N$ at an appropriate rate, according to \cite{LJUNG1992}. However, the bias will be negligibly small if the order $n$ is chosen sufficiently large. The least-squares estimate of $\zeta^n$ is found by
\begin{equation}\label{eq:ZetaH}
    \zn = \left[\frac{1}{N}\sum_{t=n+1}^{N} \varphi^n(t)[\varphi^n(t)]^{\top}\right]^{-1} \left[\frac{1}{N}\sum_{t=n+1}^{N} \varphi^n(t)w(t)\right].
\end{equation}
Under conditions of consistent estimation, and so if $n$ and $N$ approach infinity, $\bar\varepsilon_A(t,\zn)$ will be an accurate estimate of the innovation $\bar{e}(t)$.
The covariance of the innovation is estimated as the covariance of the residual as
\begin{equation}\label{eq:Cov0}
  \bar\Lambda(\zn) = \frac{1}{N}\sum_{t=n+1}^{N} \bar\varepsilon_A(t,\zn) \bar\varepsilon_A^{\top}(t,\zn),
\end{equation}
with residual \eqref{eq:peARX1} evaluated at $\zn$. The covariance of the parameter estimation error $\epsilon(t,\zn) := \zn - \zo$ with $\zo$ the actual coefficients of the expansions in \eqref{eq:arxInf}, is estimated by
\begin{equation}\label{eq:CovE1}
  P(\zn) = \left[ \frac{1}{N}\sum_{t=n+1}^{N} \varphi^n(t) \bar\Lambda^{-1}(\zn) [\varphi^n(t)]^{\top} \right]^{-1}.
\end{equation}
\begin{remark}\label{rem:miso}
  As each row in \eqref{eq:peARX2} is independently parametrized, the parameters $\zeta^n$ can be estimated for each row independently, resulting in $L$ MISO problems instead of one MIMO problem. This is attractive for networks with many nodes. \QED
\end{remark}

\subsection*{Step 2: Reducing to the structured network model}
The high order ARX model is used to identify the structured network model through the relations \eqref{eq:relAC1} and \eqref{eq:relBC1}. In this step, the structural properties of $A^0(\q)$ are incorporated and the parameter constraint is taken into account to fix the scaling parameter and obtain a unique solution.

The relations \eqref{eq:relAC1} and \eqref{eq:relBC1} are equivalently written as
\begin{align}
  A^0(\q) - \bar{C}^0(\q) \breve{A}^0(\q) &= 0,     \label{eq:relAC2}\\
  B^0(\q) - \bar{C}^0(\q) \breve{B}^0(\q) &= 0.     \label{eq:relBC2}
\end{align}
Then from \eqref{eq:relAC2} and \eqref{eq:relBC2} we can extract:
\begin{equation}\label{eq:reg}
  - Q(\zo)\vartheta^0 = 0,
\end{equation}
where $\vartheta^0$ represents the coefficients of the actual underlying system described by $A^0(\q)$, $B^0(\q)$, and $\bar{C}^0(\q)$ \eqref{eq:armax} and where the nonparametric ARX representation $\breve{A}^0(\q)$ and $\breve{B}^0(\q)$ of the system is incorporated in $Q(\zo)$. The specific structure of $Q(\zo)$ is provided in Appendix~\ref{app:QT}. The polynomial terms in \eqref{eq:relAC2} and \eqref{eq:relBC2} are considered up to time lag $n$, and the row dimension of $Q(\zo)$ is equal to $\text{dim}(\zo)$.

On the basis of the estimated nonparametric ARX model parameters $\zn$, an initial least-squares\footnote{Weighted least-squares can be used as well (see Step 3) with weighting matrix $W(\zn) = P^{-1}(\zn)$ \cite{GALRINHO2016_T}.} estimate of $\vartheta^0$ is obtained by the linear constrained optimization problem
\begin{align}\label{eq:opt1}
  \vtn^{(0)} = \min_{\vartheta} &\quad \vartheta^{\top}Q^{\top}(\zn) Q(\zn)\vartheta \\
  \text{subject to} & \quad \Gamma\vartheta=\gamma,\label{eq:constr1}
\end{align}
where the constraint \eqref{eq:constr1} results from Condition 4 in Proposition \ref{prop:idf}. The optimization problem can be solved using the Lagrangian and the Karush–Kuhn–Tucker conditions \cite{CHONG2008}, giving
\begin{equation}\label{eq:Theta1}
  \begin{bmatrix}\vtn^{(0)}\\ \hat\lambda_N^{(0)} \end{bmatrix} = \begin{bmatrix}
  Q^{\top}(\zn)Q(\zn) & \Gamma^{\top}\\ \Gamma&0 \end{bmatrix}^{-1} \begin{bmatrix}0\\ \gamma\end{bmatrix},
\end{equation}
where $\hat\lambda_N^{(0)}$ are the estimated Lagrange multipliers. The covariance of the residuals is updated according to (initially for $k=0$):
\begin{equation}\label{eq:Cov1}
  \bar\Lambda(\vtn^{(k)}) = \frac{1}{N}\sum_{t=n+1}^{N} \bar\varepsilon(t,\vtn^{(k)}) \bar\varepsilon^{\top}(t,\vtn^{(k)}),
\end{equation}
with residual
\begin{multline}
  \bar\varepsilon(t,\vtn^{(k)}) = \bar{C}^{-1}(\q,\vtn^{(k)}) \\ \left[ A(\q,\vtn^{(k)}) w(t) -  B(\q,\vtn^{(k)}) r(t) \right].
\end{multline}

\subsection*{Step 3: Improving the structured network model}
This step aims to correct for the residuals in \eqref{eq:reg} that are not accounted for in \eqref{eq:Theta1}, due to the fact that only a high order approximation of the nonparametric ARX model is used.

Substituting $\breve{A}(\q,\zn)$ and $\breve{B}(\q,\zn)$ for $\breve{A}^0(\q)$ and $\breve{B}^0(\q)$, respectively, into \eqref{eq:relAC2} and \eqref{eq:relBC2} gives
\begin{align}
  &A^0(\q) - \bar{C}^0(\q) \breve{A}^0(\q) = \nonumber \\
  &\hspace{8em} \bar{C}^0(\q) [\breve{A}(\q,\zn)-\breve{A}^0(\q)],  \label{eq:relAC} \\
  &B^0(\q) - \bar{C}^0(\q) \breve{B}^0(\q) = \nonumber\\
  &\hspace{8em} \bar{C}^0(\q) [\breve{B}(\q,\zn)-\breve{B}^0(\q)],  \label{eq:relBC}
\end{align}
which are equivalently written as (by using \eqref{eq:reg})
\begin{equation}\label{eq:reg2}
  - Q(\zn)\vartheta^0 = T(\vartheta^0)(\zn-\zeta^{no}),
\end{equation}
where the matrix $T(\vartheta^0)$ is given in Appendix \ref{app:QT}. The estimate of $\vartheta^0$ with minimum variance is obtained by solving a weighted least-squares problem, where the weighting matrix is given by the inverse covariance matrix of the right-hand side expression in \eqref{eq:reg2}. As this term depends on $\vartheta$ this problem is solved iteratively by
\begin{align}\label{eq:opt2}
  \vtn^{(k)} = \min_{\vartheta} &\quad \vartheta^{\top}Q^{\top}(\zn) W(\vtn^{(k-1)}) Q(\zn)\vartheta \\
  \text{subject to} & \quad \Gamma\vartheta=\gamma,\label{eq:constr2}
\end{align}
where the weighting matrix $W(\vtn^{(k-1)})$ is iteratively updated for $k = 1,2,\cdots$ according to
\begin{align}\label{eq:Weight2}
  & W(\vtn^{(k-1)}) = T^{-\top}(\vtn^{(k-1)}) P^{-1}(\vtn^{(k-1)}) T^{-1}(\vtn^{(k-1)}), \\
  & \hspace{-2mm}\text{where $P(\vtn^{(k-1)})$ is updated according to} \nonumber \\
  & P^{-1}(\vtn^{(k-1)}) = \frac{1}{N}\sum_{t=n+1}^{N} \varphi^n(t) \bar\Lambda^{-1}(\vtn^{(k-1)}) [\varphi^n(t)]^{\top}.
\end{align}
Similar to Step~2, this optimization problem can be solved through
\begin{equation}\label{eq:Theta2}
  \begin{bmatrix}\vtn^{(k)}\\ \hat\lambda_N^{(k)} \end{bmatrix} = \begin{bmatrix}
  Q^{\top}(\zn)W(\vtn^{(k-1)})Q(\zn) & \Gamma^{\top}\\ \Gamma&0 \end{bmatrix}^{-1} \begin{bmatrix}0\\ \gamma\end{bmatrix},
\end{equation}
where $\hat\lambda_N^{(k)}$ are the estimated Lagrange multipliers. Finally, the covariance of the residuals is updated according to \eqref{eq:Cov1}.

\begin{remark}
  Although this step is asymptotically efficient, iterating may improve the estimate for finite data length $N$. The cost
  \begin{equation}\label{eq:cost_det}
    V_N(\vtn^{(k)}) = \frac{1}{N} \det \sum_{t=1}^{N} \bar{\varepsilon}(t,\vtn^{(k)}) \bar{\varepsilon}^{\top}(t,\vtn^{(k)})
  \end{equation}
  is evaluated at each iteration to decide whether the parameter estimation has improved. However, as \eqref{eq:cost_det} is not affine in the parameters, an improved cost may still result in deteriorated parameter estimates. The cost \eqref{eq:cost_det} is used as it is independent of $\Lambda(\theta)$ and under Gaussian assumptions, minimizing \eqref{eq:cost_det} results in minimum variance of the estimates if $\Lambda(\theta)$ is independently parametrized from $A(\q,\theta)$, $B(\q,\theta)$, and $C(\q,\theta)$. In this situation, the asymptotic (minimum) variance resulting from \eqref{eq:costconv} is equal to the asymptotic variance of the maximum likelihood estimator \cite{LJUNG1999}. \QED
\end{remark}

\subsection*{Step 4: Obtaining the noise model}
Having estimated $A^0(\q)$, $B^0(\q)$, and $\bar{C}^0(\q)$, the noise model represented by $C^0(\q)$ and $\Lambda^0$ can be recovered. On the basis of \eqref{eq:relCC1}, the estimate of $C^0(\q)$ is constructed as
\begin{equation}\label{eq:ThetaCh}
  C(\q,\tn^{(k)}) = \bar{C}(\q,\vtn^{k})A_0^{-1}(\vtn^{k}).
\end{equation}
Further, as $\Lambda^0 = A_0^0\bar\Lambda^0A_0^0$, the estimate of $\Lambda^0$ is given by
\begin{equation}\label{eq:ThetaLh}
  \Lambda(\tn^{(k)}) = A_0(\vtn^{(k)})\bar{\Lambda}(\vtn^{(k)})A_0(\vtn^{(k)}).
\end{equation}

\subsection*{Step 5: Estimating the discrete-time components}
With the estimate of $A^0(\q)$ from Step~3, the dynamics of the discrete-time network have been estimated. The components, represented by $\bar{X}(\q)$ and $\bar{Y}(\q)$, are obtained through the inverse mapping of $A(\q) = \bar{X}(\q) + \bar{Y}(\q)$, with $\bar{X}(\q)$ diagonal and $\bar{Y}(\q)$ Laplacian, given by
\begin{align}
    \bar{x}_{jk,\ell} &= \begin{cases}
    0                            & \mbox{if } j\neq k \\
    a_{jj,\ell} + \sum_{i\neq j} a_{ij,\ell},  & \mbox{if } j=k \end{cases} \label{eq:Xbar}\\
    \bar{y}_{jk,\ell} &= \begin{cases}
    a_{jk,\ell},                 & \mbox{if } j\neq k \\
    -\sum_{i\neq j} a_{ij,\ell}, & \mbox{if } j=k. \end{cases} \label{eq:Ybar}
\end{align}

\subsection*{Step 6: Estimating the continuous-time components}
The continuous-time representation $X(p), Y(p), \bar B(p)$ can be obtained from the estimated discrete-time model from Steps 5 and 3 through the inverse mapping of \eqref{eq:bar1}-\eqref{eq:bar2}, given by
\begin{align}
    x_{jk,\ell}& = (-T_s)^{\ell} \textstyle\sum_{i=\ell}^{n_a} \binom{i}{\ell} \bar{x}_{jk,i},\label{eq:X}\\
    y_{jk,\ell}& = (-T_s)^{\ell} \textstyle\sum_{i=\ell}^{n_a} \binom{i}{\ell} \bar{y}_{jk,i},\label{eq:Y}\\
    \bar{b}_{jj,\ell}& = (-T_s)^{\ell} \textstyle\sum_{i=\ell}^{n_b} \binom{i}{\ell}  b_{jj,i}.\label{eq:barB}
\end{align}

\subsection*{The complete algorithm}
The above steps describe the procedure for identifying the physical components of a diffusively coupled linear network with an ARMAX-like model structure. This procedure leads to the following algorithm.
\begin{algorithm}[ARMAX-like model structure]\label{alg:ARMAX}
  Consider a data generating system $\mathcal{S}$ with $F^0(q):=C^0(\q)$ a monic polynomial and a network model set $\mathcal{M}$ \eqref{eq:modelset} with $F(q,\theta):= C(\q,\theta)$ a monic polynomial. Then $M(\hat\theta_N)$, a consistent estimate of $M^0$, is obtained through the following steps:
  \begin{enumerate}
    \item Estimate the nonparametric ARX model \eqref{eq:arxInf} by least squares \eqref{eq:ZetaH} to obtain $\zn$.
    \item Reduce the nonparametric ARX model to a parametric model \eqref{eq:ABF_DT} by weighted least-squares \eqref{eq:Theta1} to obtain $\vtn^{(0)}$.
    \item Improve the parametric model \eqref{eq:ABF_DT} by weighted least-squares \eqref{eq:Theta1} to obtain $\vtn^{(k)}$ for $k=1,2,\ldots$.
    \item Obtain the noise model by calculating \eqref{eq:ThetaCh} and \eqref{eq:ThetaLh} to obtain $C(\q,\hat\theta_N^{(k)})$ and $\Lambda(\hat\theta_N^{(k)})$.
    \item Obtain the discrete-time component values through \eqref{eq:Xbar} and \eqref{eq:Ybar} to estimate $\bar{X}(\q)$ and $\bar{Y}(\q)$.
    \item Obtain the continuous-time parametric model through \eqref{eq:X}-\eqref{eq:barB} to estimate $X(p)$, $Y(p)$, and $\bar{B}(p)$. \QED
  \end{enumerate}
\end{algorithm}
Consistency and minimum variance of the estimates obtained with Algorithm \ref{alg:ARMAX} follows from the similarity with WNSF and its proof, under technical conditions on the rates with which $n$ and $N$ tend to infinity \cite{GALRINHO2019}. The main difference is that $A(\q,\theta)$ is nonmonic and symmetrically parametrized, resulting in a different structure in \eqref{eq:reg}. In particular, the structure in $Q(\zeta^n)$ and $T(\vartheta^0)$ is different and the optimization problem \eqref{eq:opt1}-\eqref{eq:constr1} is constrained. For consistency, $Q(\zn)$ needs to have full column rank, which can be shown to be satisfied if the identifiability conditions in Proposition \ref{prop:idf} are satisfied. Consistency of Step~4, 5, and 6 follows naturally.
\begin{remark}
  In order to perform Algorithm \ref{alg:ARMAX}, the measured data $\{z(t)\}$ is needed; the order $n$ of the ARX model needs to be chosen; and the true orders $n_a$, $n_b$, and $n_c$ of $A(\q)$, $B(\q)$, and $C(\q)$, respectively, need to be known. \QED
\end{remark}
\begin{remark}[Simplification to an ARX-like model structure]
  If the noise is not filtered, that is $F(q):=C(\q)=I$, the network has an ARX-like model structure and the ARX model \eqref{eq:arxInf} can exactly describe the diffusively coupled network, where $\breve{A}(\q)$ and $\breve{B}(\q)$ are of the same order as $A(\q)$ and $B(\q)$, respectively. Algorithm \ref{alg:ARMAX} improves in the sense that Step 1 is consistent for sufficiently large data length $N$ and therefore, no additional estimation error is made in Step 2, which makes Step 3 superfluous. \QED
\end{remark}
\begin{remark}[Simplification to an ARX model structure]
  If $A_0=I$ in addition to unfiltered noise ($F(q):=C(\q)=I$), the network has an ARX model structure. In this case, the network can consistently be identified in a single step, by incorporating the symmetric structure in Step 1 of Algorithm \ref{alg:ARMAX} and by choosing the order of $\breve{A}(\q)$ and $\breve{B}(\q)$ equivalent to the order of $A(\q)$ and $B(\q)$, respectively. The resulting identification procedure has been described in \cite{KIVITS2019}. \QED
\end{remark}

\begin{figure}[t]
  \centering
  \includegraphics[width=0.95\columnwidth]{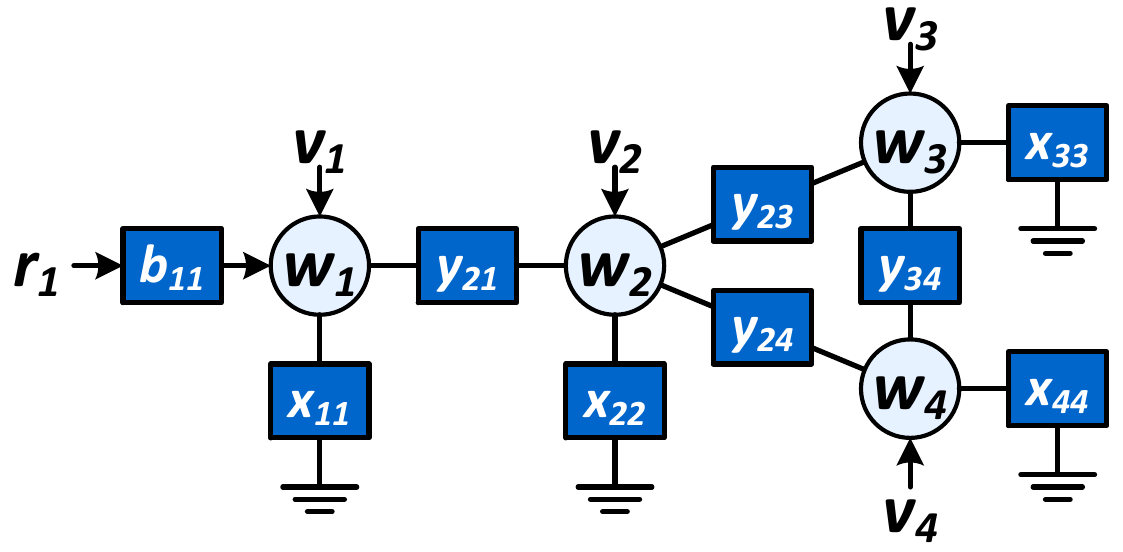}
  \caption{The continuous-time network model with interconnection dynamics described by the polynomials $y_{jk}(p)$, dynamics to the ground described by the polynomial $x_{jj}(p)$, and static excitation filter $b_{11}$. }
  \label{fig:sim}
\end{figure}

\begin{table}[t]
    \centering
    \caption{The order $n$ of the ARX fit, the data length $N$, and the rate $n^4/N$ for each set of Experiment 1. }
    \setlength\tabcolsep{3pt}
    \resizebox{\hsize}{!}{\begin{tabular}{l||c|c|c|c|c|c|c|c|c|c}
          Set & 1     & 2     & 3     & 4     & 5     & 6     & 7     & 8     & 9     & 10 \\ \hline \hline
          $n$ & 3     & 4     & 5     & 6     & 7     & 8     & 9     & 10    & 11    & 12    \\ \hline
          $N$ & 96    & 320   & 834   & 1852  & 3694  & 6827  & 11930 & 20000 & 32536 & 51841 \\ \hline
      $n^4/N$ & 0.85  & 0.80  & 0.75  & 0.70  & 0.65  & 0.60  & 0.55  & 0.50  & 0.45  & 0.40  \\
    \end{tabular}}
    \label{tab:exp1}
\end{table}

\begin{table*}[t]
\begin{center}
  \caption{The true parameter values of $X(p)$ and $Y(p)$ and the mean and standard deviation (SD) of their estimates for both sets ($K=1$ and $K=3$) of Experiment 2. }
  \setlength\tabcolsep{3pt}
  \resizebox{\hsize}{!}{\begin{tabular}{l|| c|c|c|c|c|c|c|c}
    Parameter   & $\theta^c_1$&$\theta^c_2$&$\theta^c_3$&$\theta^c_4$&$\theta^c_5$&$\theta^c_6$&$\theta^c_7$&$\theta^c_8$ \\ \hline
    True value  & $1$&$0$&$1\times10^{-2}$&$1$&$0$&$2\times10^{-2}$&$1$&$0$\\ \hline \hline
    $K=1$ Mean  & $1.0136$&-$5.6440\times10^{-4}$&$9.9999\times10^{-3}$&$1.0171$&-$7.3562\times10^{-4}$&$2.0054\times10^{-2}$&$1.0221$&-$1.6503\times10^{-3}$\\ \hline
    $K=1$ SD    & $3.6467\times10^{-2}$&$2.0287\times10^{-3}$&$2.3009\times10^{-6}$&$9.5878\times10^{-2}$&$6.1179\times10^{-3}$&$1.3032\times10^{-4}$&$1.4998\times10^{-1}$&$8.2037\times10^{-3}$\\ \hline \hline
    $K=3$ Mean  & $1.0012$&-$7.7029\times10^{-5}$&$1.0000\times10^{-2}$&$1.0032$&-$2.0946\times10^{-4}$&$2.0001\times10^{-2}$&-$1.0085$&-$5.6775\times10^{-4}$\\ \hline
    $K=3$ SD    & $4.7774\times10^{-3}$&$2.8210\times10^{-4}$&$1.2971\times10^{-6}$&$8.2831\times10^{-3}$&$4.9076\times10^{-4}$&$1.3605\times10^{-5}$&$1.5390\times10^{-2}$&$1.1072\times10^{-3}$\\ \hline \hline \hline
    Parameter   & $\theta^c_9$&$\theta^c_{10}$&$\theta^c_{11}$&$\theta^c_{12}$&$\theta^c_{13}$&$\theta^c_{14}$&$\theta^c_{15}$&$\theta^c_{16}$ \\ \hline
    True value  & $5\times10^{-2}$&$10$&$0$&$7\times10^{-2}$& -$4$&-$3\times10^{-1}$&$0$&$0$\\ \hline \hline
    $K=1$ Mean  & $4.9848\times10^{-2}$&$9.9676$&$1.4936\times10^{-3}$&$6.9868\times10^{-2}$&$-4.0061$&-$3.0078\times10^{-1}$&$6.2954\times10^{-3}$&$2.1455\times10^{-4}$\\ \hline
    $K=1$ SD    & $8.5376\times10^{-4}$&$1.5783\times10^{-1}$&$8.0059\times10^{-3}$&$6.6951\times10^{-4}$&$5.4752\times10^{-2}$&$1.8156\times10^{-3}$&$6.6368\times10^{-2}$&$6.6894\times10^{-4}$\\ \hline \hline
    $K=3$ Mean  & $4.9992\times10^{-2}$&$1.003\times10^{1}$&-$1.8345\times10^{-4}$&$7.008\times10^{-2}$&-$3.9998$&-$3.0003\times10^{-1}$&$4.2531\times10^{-4}$&-$2.4708\times10^{-5}$\\ \hline
    $K=3$ SD    & $4.8096\times10^{-5}$&$3.0088\times10^{-2}$&$1.8316\times10^{-3}$&$7.9796\times10^{-5}$&$2.8000\times10^{-3}$&$1.9475\times10^{-4}$&$3.0851\times10^{-3}$&$1.1680\times10^{-4}$\\ \hline \hline \hline
    Parameter   & $\theta^c_{17}$&$\theta^c_{18}$&$\theta^c_{19}$&$\theta^c_{20}$&$\theta^c_{21}$&$\theta^c_{22}$&$\theta^c_{23}$&$\theta^c_{24}$ \\ \hline
    True value  & $0$&$0$&$0$&-$4\times10^{-1}$& -$8$&$0$&-$9$&-$6\times10^{-1}$ \\ \hline \hline
    $K=1$ Mean  & $8.9935\times10^{-3}$&-$1.0261\times10^{-4}$&$1.3912\times10^{-2}$&-$4.0102\times10^{-1}$&-$8.0188$&-$3.3463\times10^{-4}$&-$8.9442$&-$5.9889\times10^{-1}$\\ \hline
    $K=1$ SD    & $3.8590\times10^{-2}$&$6.2002\times10^{-4}$&$1.6645\times10^{-1}$&$9.1170\times10^{-3}$&$1.1026\times10^{-1}$&$3.9220\times10^{-3}$&$1.5367\times10^{-1}$&$1.2576\times10^{-2}$\\ \hline \hline
    $K=3$ Mean  & $5.2077\times10^{-4}$&-$2.0221\times10^{-5}$&$2.1762\times10^{-3}$&-$4.0005\times10^{-1}$&-$8.0012$&-$5.8232\times10^{-5}$&-$8.9958$&-$6.0024\times10^{-1}$\\ \hline
    $K=3$ SD    & $3.4376\times10^{-3}$&$1.6190\times10^{-4}$&$6.1368\times10^{-3}$&$3.9797\times10^{-4}$&$7.6248\times10^{-3}$&$2.8442\times10^{-4}$&$1.3217\times10^{-2}$&$9.3302\times10^{-4}$\\
  \end{tabular}}
  \label{tab:parct}
\end{center}
\end{table*}

\begin{figure}[t]
  \centering
  \includegraphics[width=\columnwidth]{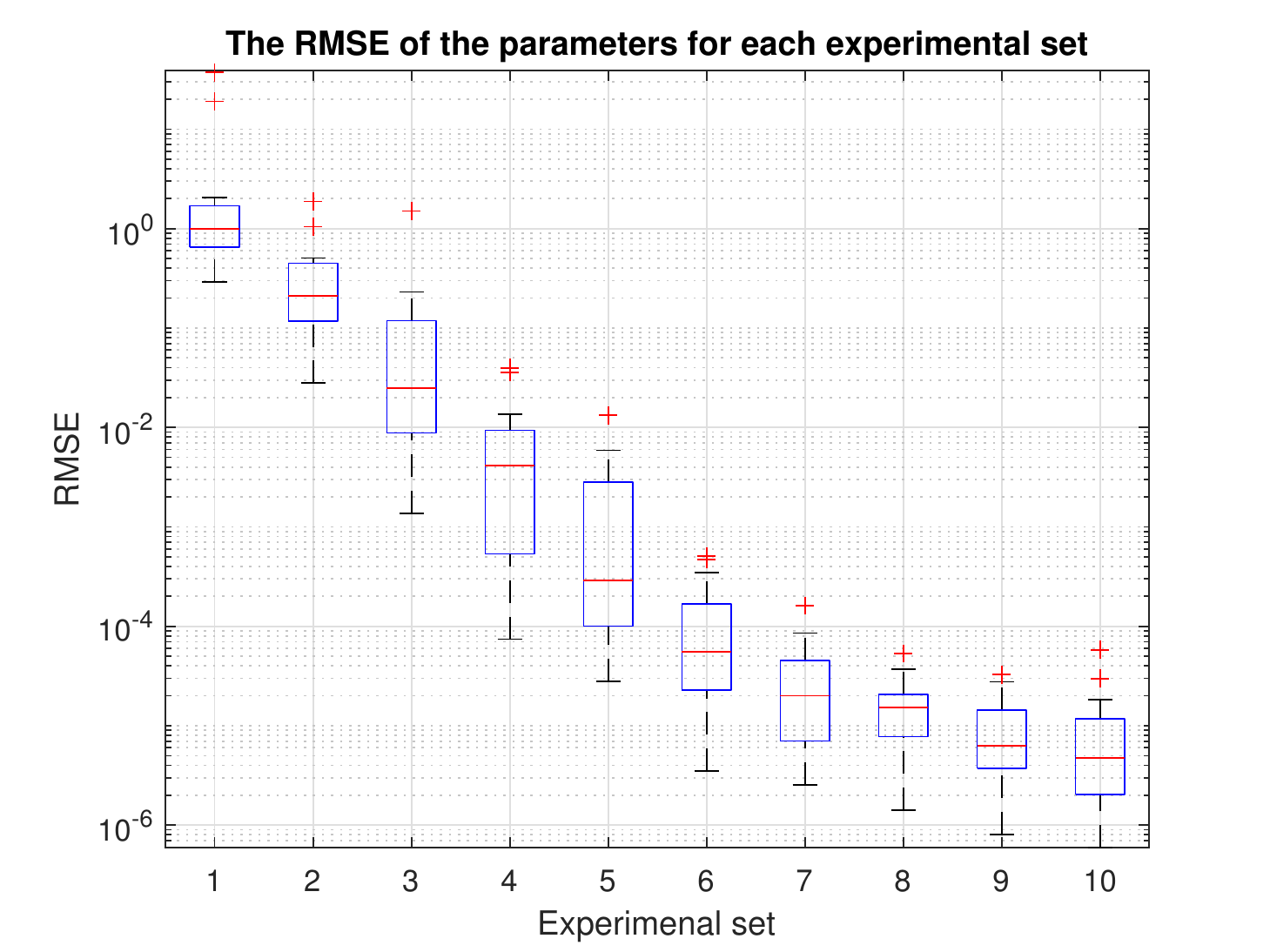}
  \caption{Boxplot of the relative means squared error (RMSE) \eqref{eq:rmse} of \ \ \linebreak  the parameters of $X(p)$ and $Y(p)$ for each experimental set. }
  \label{fig:rmse}
\end{figure}

\begin{figure*}[t]
  \centering
  \begin{minipage}[t]{.5\textwidth}
    \centering
    \includegraphics[width=\columnwidth]{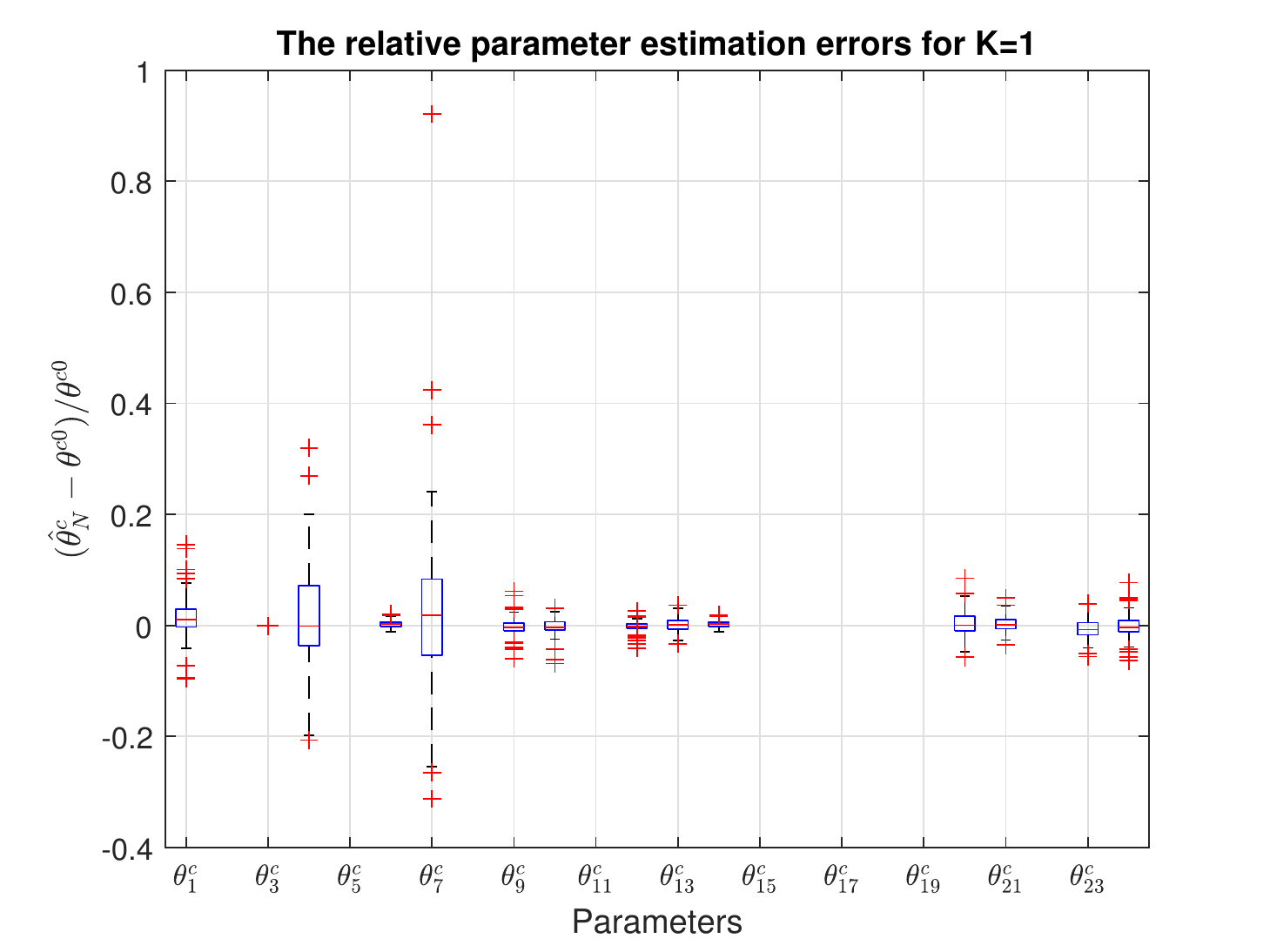}
    \caption{Boxplot of the relative estimation errors of the parameters of $X(p)$ \ \ \ \linebreak and $Y(p)$ for the experimental set with a single excitation signal ($K=1$), \ \ \ \linebreak for parameters with a nonzero true value. }
    \label{fig:rel1}
  \end{minipage}%
  \begin{minipage}[t]{.5\textwidth}
    \centering
    \includegraphics[width=\columnwidth]{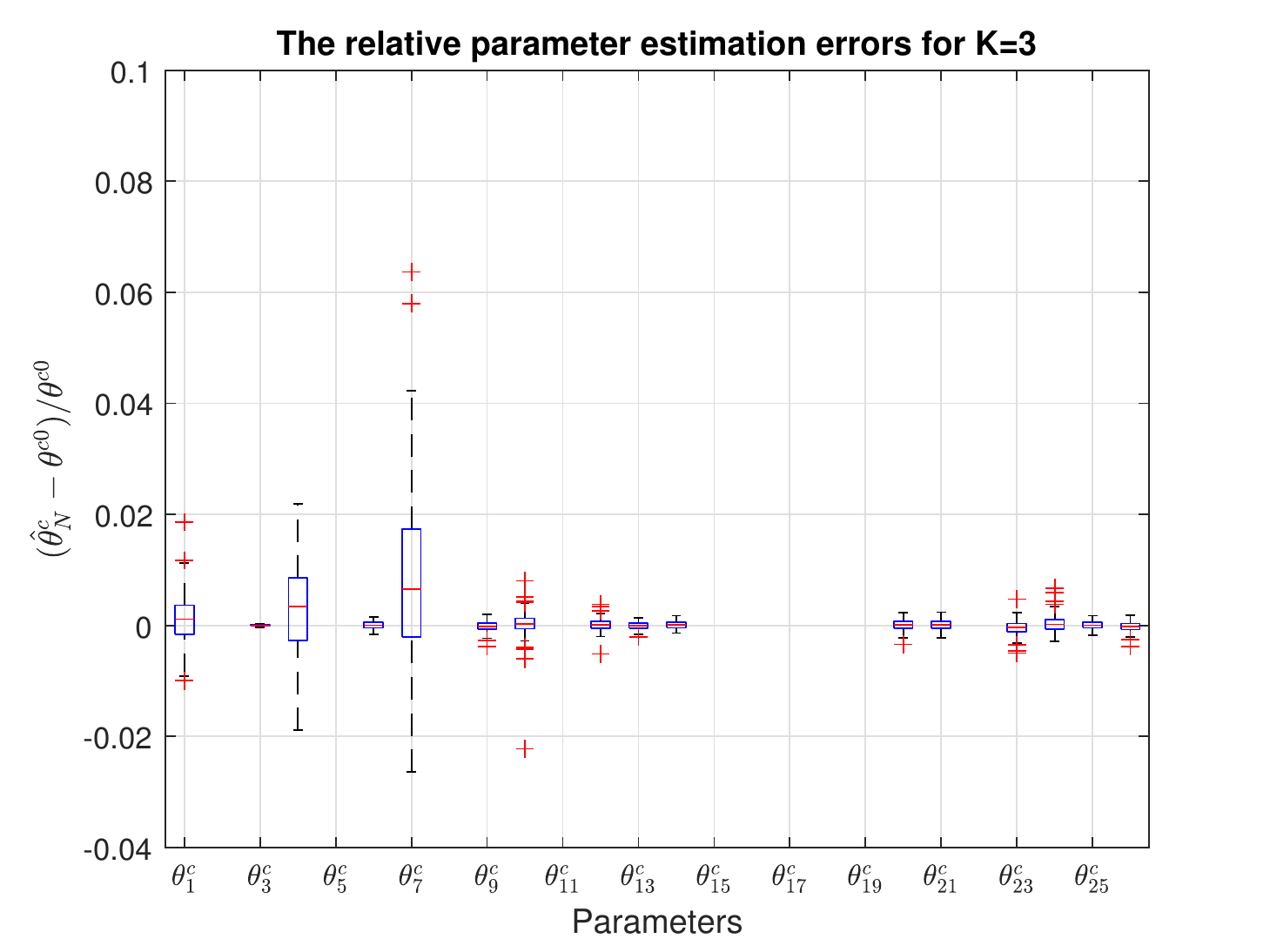}
    \caption{Boxplot of the relative estimation errors of the parameters of $X(p)$, \ \ \linebreak $Y(p)$ and $B(p)$ for the experimental set with three excitation signals ($K=3$), \ \ \ \linebreak for parameters with a nonzero true value. }
    \label{fig:rel3}
  \end{minipage}\\
  \begin{minipage}[t]{.5\textwidth}
    \centering
    \includegraphics[width=\columnwidth]{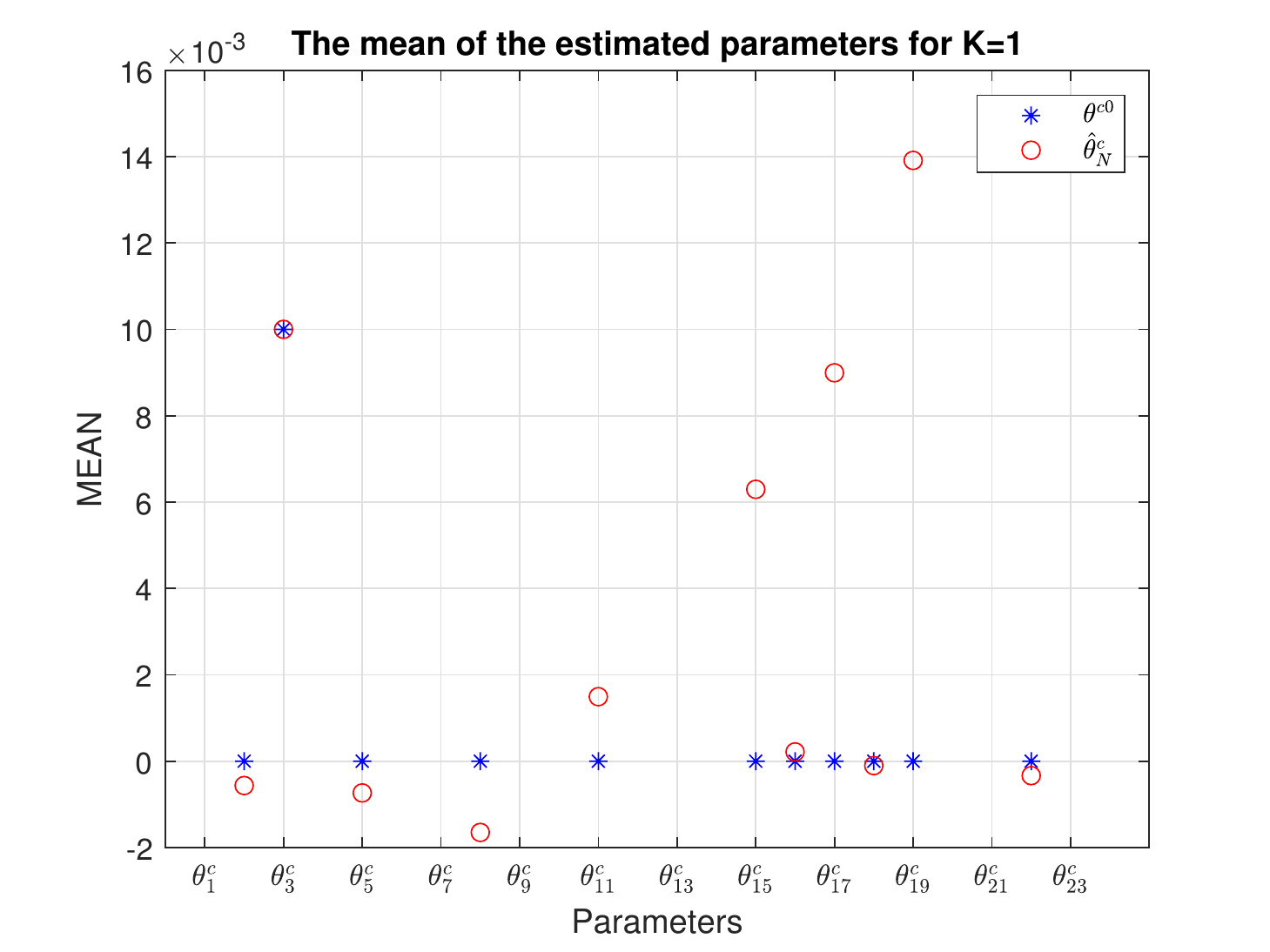}
    \caption{The true parameter values (blue) and the mean of the estimated \ \ \ \linebreak parameter values (red) of $X(p)$ and $Y(p)$ for the experimental set with \ \ \ \linebreak a single excitation signal ($K=1$), focusing on parameters with a true \ \ \ \linebreak value of $0$. }
    \label{fig:mean1}
  \end{minipage}%
  \begin{minipage}[t]{.5\textwidth}
    \centering
    \includegraphics[width=\columnwidth]{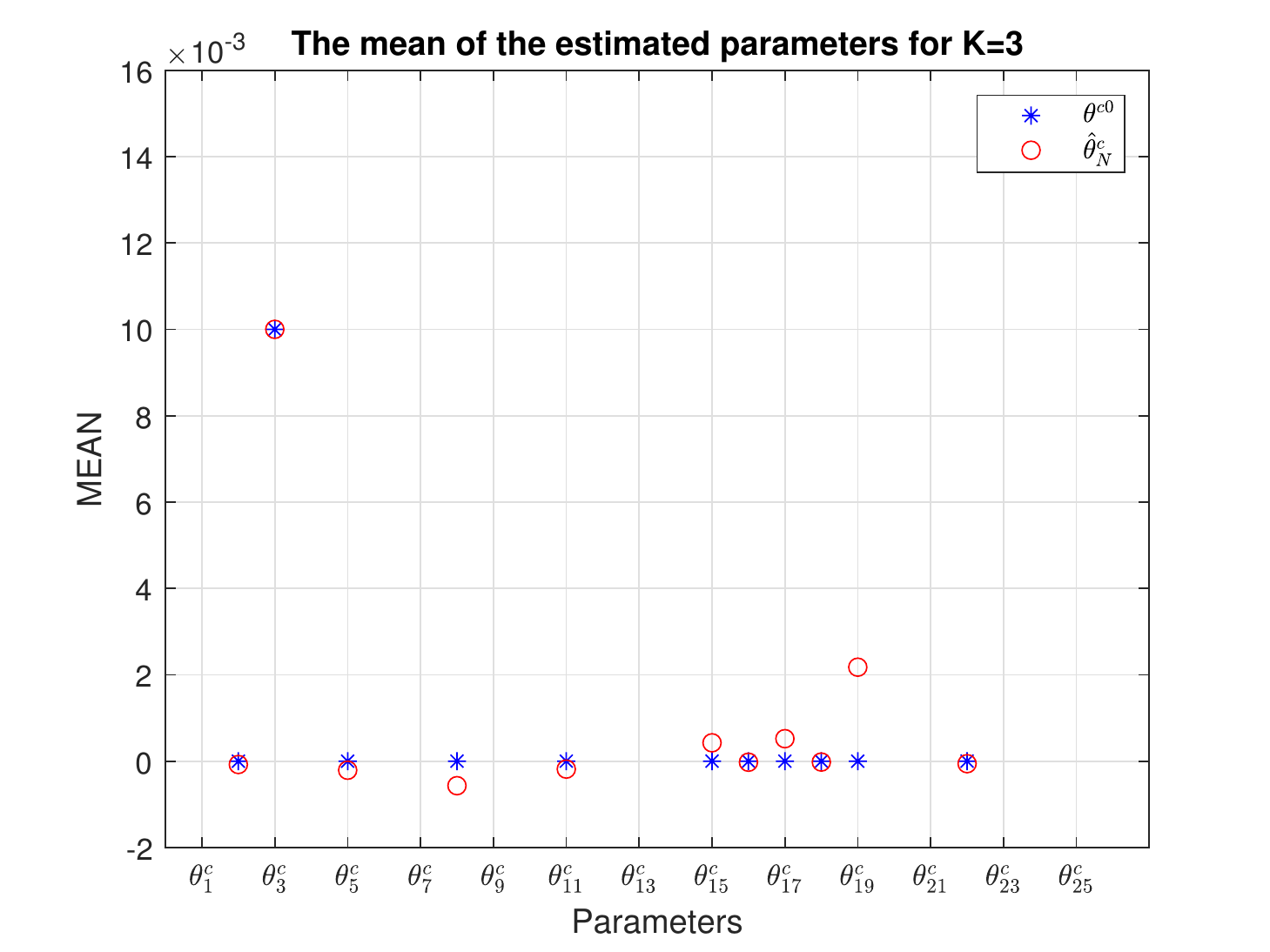}
    \caption{The true parameter values (blue) and the mean of the estimated \ \ \ \linebreak parameter values (red) of $X(p)$, $Y(p)$ and $B(p)$ for the experimental set \ \ \ \linebreak with three excitation signals ($K=3$), focusing on parameters with a true \ \ \  \linebreak value of $0$. }
    \label{fig:mean3}
  \end{minipage}
\end{figure*}

\section{Simulation example}\label{sec:simex}
This section contains a simulation example that serves to illustrate the theory and to show that indeed the topology and the physical components of a network can be identified using a single excitation signal only. The identification is performed with the algorithm presented above.

\subsection{Experimental set-up}
Consider the continuous-time diffusively coupled network \eqref{eq:xy} consisting of four one-dimensional nodes, with external signal $u(t) = B_0r(t)+v(t)$, described by
\begin{multline}\label{eq:simct}
  (X_0+Y_0) w(t) + (X_1+Y_1) \frac{d}{dt}w(t) + X_2 \frac{d^2}{dt^2}w(t) \\= B_0 r(t) + v(t),
\end{multline}
where $r(t)$ is one-dimensional and known, and $B_0$ has dimension $4\times1$ and has only the first element nonzero. Figure \ref{fig:sim} shows the structure of this network, where it can be seen that the excitation signal $r(t)=r_1(t)$ enters the network only at node $w_1$. One can think of this network as a mechanical mass-spring-damper network as explained in Section \ref{sec:phys}, where $X_0$ and $Y_0$ contain the spring constants, $X_1$ and $Y_1$ contain the damper coefficients, $X_2$ contains the masses, the node signals $w(t)$ represent the positions of the masses, and the excitation signal $r(t)$ is a force. One can also think of this network as an electrical circuit with nodes that are interconnected through capacitors, resistors, and inductors (in parallel). The matrices $X_0$ and $Y_0$ contain the capacitances, $X_1$ and $Y_1$ contain the conductance values of the resistors, $X_2$ contain the inverses of the inductances, the node signals $w(t)$ represent the electric potentials of the interconnection points, and the excitation signal $r(t)$ is the derivative of a current flow.

The discrete-time representation is obtained by applying the discretization method described in Section \ref{ssec:dt} with sampling frequency $f_s = 100$ Hz. In addition, the disturbance $v(t)$ acting on the network is modeled in discrete time as a white noise filtered by a first order filter. This results in the discrete-time network model \eqref{eq:ABF_DT}
\begin{equation}\label{eq:simdt}
    [A_0 + A_1\q + A_2q^{-2}] w(t) = B_0 r(t) + [I + C_1\q] e(t),
\end{equation}
with $A_i=\bar{X}_i+\bar{Y}_i$ and $A_i, \bar Y_i$ symmetric and $\bar X_i$ diagonal for $i=0,1,2$. The network topology is assumed to be unknown reflected by the situation that in the model there are parametrized second order connections between all pairs of nodes. As $A_2=\bar{X}_2$ is diagonal, identifiability Condition 2 in Proposition \ref{prop:idf} is satisfied. The location where $r(t)$ enters and the first nonzero parameter of $B_0$ is assumed to be known, which induces that $B_0$ is fixed and not parametrized. This guarantees that identifiability Condition 4 in Proposition \ref{prop:idf} is satisfied.

The symmetric structure of $Y(p)$ is taken into account in the parametrization of the continuous-time model. The continuous-time model matrices \eqref{eq:simct} are parametrized as
\begin{equation}
    \resizebox{0.89\hsize}{!}{$X_0 = \begin{bmatrix}\theta^c_{1}&0&0&0\\ 0&\theta^c_{4}&0&0\\ 0&0&\theta^c_{7}&0\\ 0&0&0&\theta^c_{10}\end{bmatrix},\
    X_1 = \begin{bmatrix}\theta^c_{2}&0&0&0\\ 0&\theta^c_{5}&0&0\\ 0&0&\theta^c_{8}&0\\ 0&0&0&\theta^c_{11}\end{bmatrix}, $}
\end{equation}\begin{equation}
    \resizebox{0.45\hsize}{!}{$X_2 = \begin{bmatrix}\theta^c_{3}&0&0&0\\ 0&\theta^c_{6}&0&0\\ 0&0&\theta^c_{9}&0\\ 0&0&0&\theta^c_{12} \end{bmatrix},$}
\end{equation}\begin{equation}
    \resizebox{0.89\hsize}{!}{$Y_0 = \begin{bmatrix}\star&\theta^c_{13}&\theta^c_{15}&\theta^c_{17}\\ \theta^c_{13}&\star&\theta^c_{19}&\theta^c_{21}\\ \theta^c_{15}&\theta^c_{19}&\star&\theta^c_{23}\\ \theta^c_{17}&\theta^c_{21}&\theta^c_{23}&\star\end{bmatrix},\
    Y_1 = \begin{bmatrix} \star&\theta^c_{14}&\theta^c_{16}&\theta^c_{18}\\ \theta^c_{14}&\star&\theta^c_{20}&\theta^c_{22}\\ \theta^c_{16}&\theta^c_{20}&\star&\theta^c_{24}\\ \theta^c_{18}&\theta^c_{22}&\theta^c_{24}&\star\end{bmatrix}, $}
\end{equation}
where the elements $\star$ follow from the Laplacian structure. Observe that $\theta^c_i\geq0$ for $i=1,\ldots,12$ and $\theta^c_i\leq0$ for $i=13,\ldots,24$ as all components have positive values, see Section \ref{ssec:high}. The exact true parameter values, represented by $\theta^{c0}$ are given in Table \ref{tab:parct}.

The external excitation signal $r_1(t)$ is an independent white noise process with mean $0$ and variance $\sigma_r^2=1$. All nodes are subject to disturbances $e_i(t)$, which are independent white noise processes (uncorrelated with $r_1(t)$) with mean $0$ and variance $\sigma_e^2=10^{-4}$. In Step~2 of the algorithm, the possibility to apply the weighting $W(\zn)=P^{-1}(\zn)$ is exploited. In Step~3 of the algorithm, at most $50$ iterations are allowed to improve the result of Step~2.

Experiment~1 serves to show that the parameters can consistently be identified with a single excitation signal only. In order to do so, the identification is performed for different orders $n$ of the ARX model in Step~1 and different data lengths $N$, such that they increase at an appropriate rate, guaranteeing that $n^4/N$ decreases for increasing $n$ and $N$ \cite{GALRINHO2019}. The chosen values $n$, $N$ and the rate $n^4/N$ are given in Table \ref{tab:exp1}. For each experimental set, $20$ Monte-Carlo simulations are performed, where in each run new excitation and noise signals are generated.

Experiment~2 serves to identify the parameters and topology with a single excitation signal only and to show that using more excitation signals at different nodes improves the results. In order to show this, two sets of experiments are performed, one set with a single excitation signal ($K=1$) entering at node $w_1$ and one set with three excitation signals ($K=3$) entering the network at node $w_1$, $w_2$, and $w_3$. In the former case, $B_0$ is a $4\times1$ unit vector that is fully known and in the latter case, $B_0$ is a $4\times3$ selection matrix with $b_{11,0}=1$; with parametrized elements $b_{22,0}=\theta^c_{24}$ and $b_{33,0}=\theta^c_{25}$ with true values $\theta^{c0}_{24}=1$ and $\theta^{c0}_{25}=1$; and with all other elements equal to $0$. The order of the ARX model in Step 1 of the algorithm is $n=5$ and the number of samples generated for each data set is $N = 10,000$. Both experimental sets consists of $100$ Monte-Carlo simulations, where in each run new excitation and noise signals are generated.

\subsection{Simulation results}
The simulation results of Experiment~1 are shown in Figure \ref{fig:rmse}. This figure shows a Boxplot of the relative mean squared error (RMSE) of the continuous-time model parameters, where the RMSE is determined as
\begin{equation}\label{eq:rmse}
    \text{RMSE} = \frac{\|\theta^{c0}-\hat\theta^c_N\|_2^2}{\|\theta^{c0}\|_2^2},
\end{equation}
where $\theta^c$ contains the parameters of $X(p)$ and $Y(p)$. From Figure \ref{fig:rmse} it can be seen that the RMSE decreases if both $n$ and $N$ increase such that the rate $n^4/N$ decreases. This observation supports the statement that consistent identification is achieved if the order of the ARX model $n$ tends to infinity as function of the data length $N$ at an appropriate rate \cite{GALRINHO2019}.

The simulation results of Experiment~2 are shown in Figures \ref{fig:rel1}-\ref{fig:mean3}, and Table \ref{tab:parct}.

Figure \ref{fig:rel1} and \ref{fig:rel3} show a Boxplot of the relative parameter estimation errors for $K=1$ and $K=3$, respectively, for parameters for which their underlying true value is unequal to $0$. For the other parameters, the mean values are provided in Figures \ref{fig:mean1} and \ref{fig:mean3}. From Figures \ref{fig:rel1} and \ref{fig:rel3}, it can be seen that the median of the relative errors is around $0$ for all parameters, which means that the median of the estimated parameters values are close to the true values. This supports the statement that the parameters can be identified with a single excitation signal only. However, Figure \ref{fig:rel1} also shows that for the experiment with a single excitation signal ($K=1$), 50\% of the relative parameter errors are within a range of 10\% deviation. This is quite a large deviation. From Figure \ref{fig:rel3}, it can be seen that this range reduces to 2\% deviation if the number of excitation signals is increased to three ($K=3$). Increasing the number of excitation signals improves the signal-to-noise ratio which has a clear effect on the variance of the estimated parameters.

Table \ref{tab:parct} contains the mean and standard deviation of the estimated model parameters. The experiment with three external excitation signals has two additional parameters $\theta^c_{25}$ and $\theta^c_{26}$, which have true values $1$ and which are estimated with mean $1.000$ and $9.9983\times10^{-1}$, respectively, and standard deviation $6.6505\times10^{-4}$ and $9.2690\times10^{-4}$, respectively. Although the estimates are quite accurate, small biases can still occur because of the finite values of $n$ and $N$. For all parameters, this bias is within a bound of 1 standard deviation.

Figure \ref{fig:mean1} and Figure \ref{fig:mean3} show the true values and the mean estimated values of the parameters, focusing on those parameters whose true value are equal to $0$. For $K=1$ it would be hard to identify the correct topology of the network, i.e. estimate which parameters are unequal to $0$, on the basis of the estimated mean values only. Note that for example, the zero parameter $\theta_{19}^c$ has a mean value that is higher than the non-zero parameter $\theta_3^c$. For $K=3$ this situation improves drastically.

\section{Discussion} \label{sec:disc}
In this section, three extensions of the presented theory are discussed. First, the connection with dynamic networks is made. Second, networks with unmeasured nodes are considered. Third, parameter constraints are discussed.

\subsection{Dynamic networks}
A commonly used description of dynamic networks is the module representation \cite{VANDENHOF2013}, in which a network is considered to be the interconnection of directed transfer functions (modules) through measured node signals as
\begin{equation}\label{eq:modrep}
    w(t) = G(q) w(t) + R(q) r(t) + H(q) \tilde{e}(t),
\end{equation}
with white noise process $\tilde{e}(t)$ and with proper rational transfer function matrices $G(q)$, $R(q)$, and $H(q)\in\mathcal{H}$, where the matrix entries $G_{jk}(q)$, $R_{jk}(q)$, and $H_{jk}(q)$ describe the dynamics in the paths from $w_k(t)$, $r_k(t)$, and $\tilde{e}_k(t)$ to $w_j(t)$, respectively. A diffusively coupled network \eqref{eq:ABF_DT} can be described as a module representation with the following particular symmetrical properties: \cite{KIVITS2019}
\begin{itemize}
    \item The transfer functions $G_{jk}(q)$ and $G_{kj}(q)$ have the same numerator for all $j, k$.
    \item The transfer functions $G_{jk}(q)$ and $R_{jm}(q)$ have the same denominator for all $k,m$.
    \item The transfer functions $G_{jk}(q)$ and $H_{jm}(q)$ have the same denominator for all $k,m$ if $F(q)$ is polynomial.
\end{itemize}
Moreover, conditions for a unique mapping between a module representation and a diffusively coupled network are formulated in \cite{KIVITS2019}.

The structure of $G(q)$ and $R(q)$ corresponding to a diffusively coupled network with three nodes is illustrated by Figure \ref{fig:modrep}. It shows that the modules $G_{jk}(q)=-a_{jj}^{-1}(\q)a_{jk}(\q)$ and $G_{kj}(q)=-a_{kk}^{-1}(\q)a_{jk}(\q)$ have the same numerator ($a_{jk}(\q)$) and all transfer functions in the paths towards a specific node $w_j$ have the same denominator ($a_{jj}(\q)$). Since $G_{jk}(q)$ and $G_{kj}(q)$ have the same numerator, they will either be both present or both absent, which is in accordance with the fact that they represent a single diffusively coupled interconnection. In addition, the connections to the ground node are only present in the denominators, because they are only present in $a_{jj}(\q)$. This means that they do not have an effect on the topology in the module representation, although they are part of the topology in the diffusively coupled network.

\begin{figure}
  \centering
  \includegraphics[page=2,width=0.95\columnwidth]{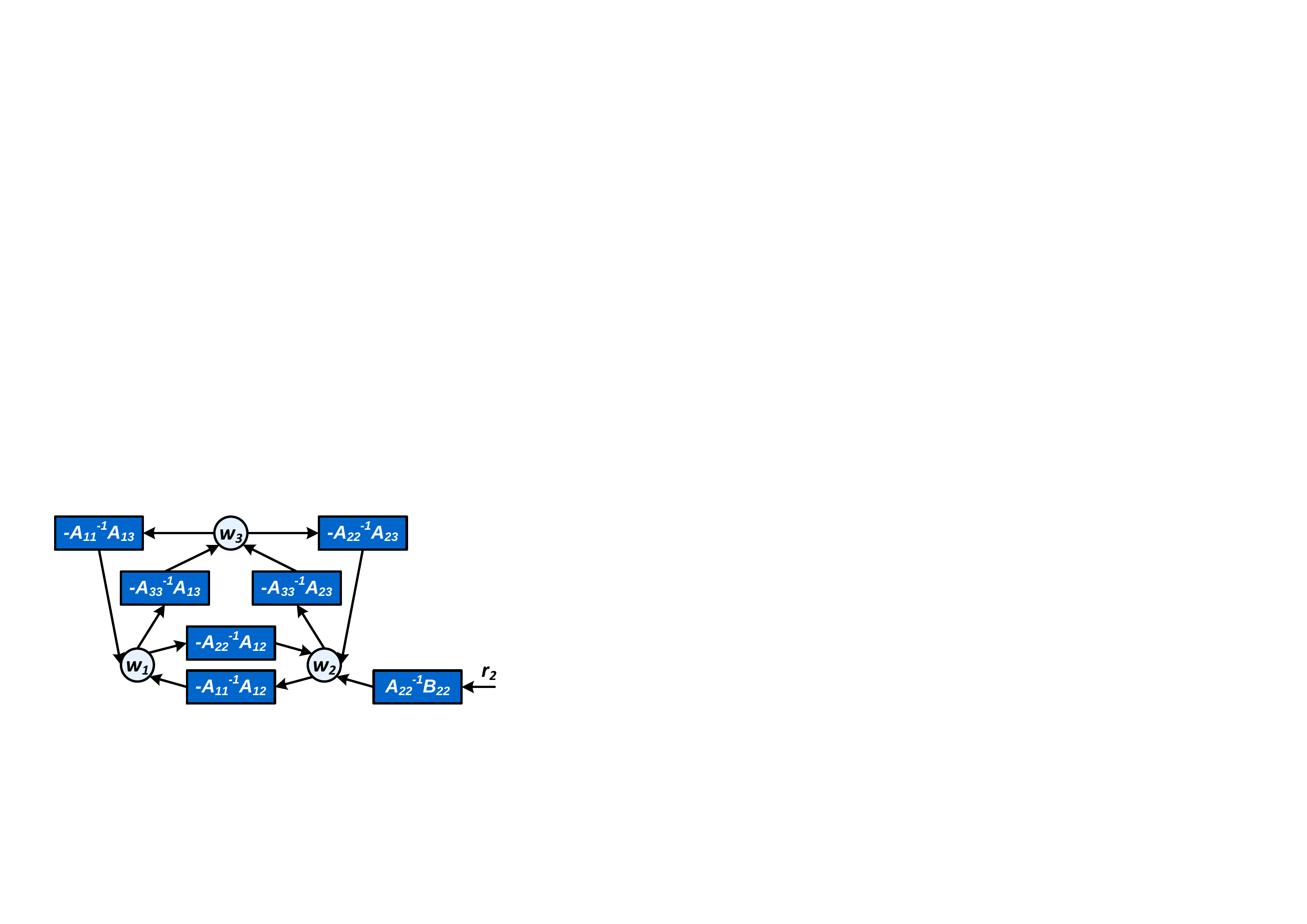}
  \caption{A module representation of a diffusively coupled network with three nodes. } \label{fig:modrep}
  \vspace{-1em}
\end{figure}

\subsection{Partial measurements}
Throughout this paper we assumed that all node signals are measured, which is a situation to which the identification method that we presented is particularly tuned. From literature it is known \cite{FRISWELL1999,LOPES2015} that network identifiability can be achieved if all nodes are measured and one node is excited, or all nodes are excited and one node is measured. The former situation is covered in this paper, see Proposition \ref{prop:idf}, as the latter situation seems less common and leading to higher experimental costs. In this latter situation a different identification method would be required, further exploiting the role of the different excitation signals, leading to a so-called indirect method of identification.

For the situation that only a subset of nodes is measured and/or excited, general identifiability conditions are not yet known, but some particular situations are considered in \cite{BAZANELLA2019} for the case of directed networks.

For estimating only a particular component or a particular connection in the network, the identifiability conditions will be less severe. In the partial measurement situation, unmeasured node signals can be removed from the representation by Gaussian elimination, which is equivalent to Kron reduction \cite{DORFLER2013} and immersion \cite{DANKERS2016}, which has been effectively applied in the local module identification problem of directed networks, see e.g. \cite{DANKERS2016}. Related results for the situation of undirected networks will be reported in a follow-up paper.

\subsection{Parameter constraints}
Physical networks that consist of interconnected physical components, such as mass-spring-damper systems and RLC circuits, are known to have positive real-valued component values in the continuous-time representation \eqref{eq:xy}. This also leads to coefficients with known signs in the corresponding discrete-time representation. In both the consistency proof and the presented algorithm, these sign constraints are not taken into account. They can be taken into account in the algorithm by adding inequality constraints of the form $\Gamma_u\vartheta<0$ to the optimization problems \eqref{eq:opt1}-\eqref{eq:constr1} and \eqref{eq:opt2}-\eqref{eq:constr2}. A priori known parameter values can easily be taken into account by the equality constraint $\Gamma\vartheta=\gamma$ in the optimization problems \eqref{eq:opt1}-\eqref{eq:constr1} and \eqref{eq:opt2}-\eqref{eq:constr2}. Known (continuous-time) component values can be taken into account as well, by splitting $\vartheta$ as $\vartheta=\vartheta_u+\vartheta_k$, where $\vartheta_u$ and $\vartheta_k$ represent the unknown and known part of $\vartheta$, respectively. Then the linear form $-Q(\zeta^n)\vartheta=0$ leads to $-Q(\zeta^n)\vartheta_u+\gamma_k(\zeta^n,\vartheta_k)=0$, where $\gamma_k(\zeta^n,\vartheta_k):=-Q(\zeta^n)\vartheta_k$ is known.

\section{Conclusion} \label{sec:conc}
Undirected networks of diffusively coupled systems can be represented by polynomial representations with particular structural properties. This has enabled the development of an effective prediction error identification method for identifying the physical components and the topology of the network. Conditions for consistent parameter estimates have been formulated for the situation that all network nodes are measured, showing that only a single excitation signal is needed for consistency. The identification is performed through a multi-step algorithm that relies on convex optimizations, being a reworked version of the recently introduced weighted null-space fitting method, adapted to the situation of the structured network models. The results of identifying the topology and parameters of a network are illustrated in a Monte Carlo simulation example. It shows that, while consistency is guaranteed for a single excitation signal, the variance of parameter estimates improves considerably when increasing the number of excitations.


\appendix
\subsection{Parameters of the structured network} \label{app:struct}
Remember that a polynomial matrix $A(\q)$ as its $(i,j)$-th element has $a_{ij}(\q)=\sum_{\ell=0}^{n_a} a_{jk,\ell}q^{-\ell}$. The model structure $A(\q,\theta_a)$, $B(\q,\theta_b)$, $C(\q,\theta_c)$ of the network model \eqref{eq:armax} is parametrized in terms of the parameters $\theta_a$, $\theta_b$, and $\eta_c$, where $\bar{C}(\q,\eta_c,\theta_a) = C(\q,\theta_c) A_0(\theta_a)$, having its constant term parametrized by $\theta_a$ and its dynamic terms by parameters $\theta_c$. $A(\q)$ is parametrized symmetrically. The parameter vectors $\theta_a$, $\theta_b$, $\theta_c$, and $\eta_c$ are given by
\begin{alignat}{3}
  &\theta_a = \begin{bmatrix}\theta_{a_1}\\ \theta_{a_2}\\ \vdots\\ \theta_{a_L}\end{bmatrix},\quad
  &\theta_{a_i}&=\begin{bmatrix}\theta_{a_{i\bm{i}}}\\ \theta_{a_{i\bm{(i+1)}}}\\ \vdots\\ \theta_{a_{iL}}\end{bmatrix},\quad
  &\theta_{a_{ij}}&=\begin{bmatrix}a_{ij,0}\\ a_{ij,1}\\ \vdots\\ a_{ij,n_a}\end{bmatrix}, \label{eq:ThetaA}\\
  &\theta_b = \begin{bmatrix}\theta_{b_1}\\ \theta_{b_2}\\ \vdots\\ \theta_{b_L}\end{bmatrix},\quad
  &\theta_{b_i}&=\begin{bmatrix}\theta_{b_{i1}}\\ \theta_{b_{i2}}\\ \vdots\\ \theta_{b_{iK}}\end{bmatrix},\quad
  &\theta_{b_{ij}}&=\begin{bmatrix}b_{ij,0}\\ b_{ij,1}\\ \vdots\\ b_{ij,n_b}\end{bmatrix}, \label{eq:ThetaB}\\
  &\theta_c = \begin{bmatrix}\theta_{c_1}\\ \theta_{c_2}\\ \vdots\\ \theta_{c_L}\end{bmatrix},\quad
  &\theta_{c_i}&=\begin{bmatrix}\theta_{c_{i1}}\\ \theta_{c_{i2}}\\ \vdots\\ \theta_{c_{iL}}\end{bmatrix},\quad
  &\theta_{c_{ij}}&=\begin{bmatrix}c_{ij,1}\\ c_{ij,2}\\ \vdots\\ c_{ij,n_c}\end{bmatrix},\label{eq:ThetaC}\\
  &\eta_c = \begin{bmatrix}\eta_{c_1}\\ \eta_{c_2}\\ \vdots\\ \eta_{c_L}\end{bmatrix},\quad
  &\eta_{c_i}&=\begin{bmatrix}\eta_{c_{i1}}\\ \eta_{c_{i2}}\\ \vdots\\ \eta_{c_{iL}}\end{bmatrix},\quad
  &\eta_{c_{ij}}&=\begin{bmatrix}\bar{c}_{ij,1}\\ \bar{c}_{ij,2}\\ \vdots\\ \bar{c}_{ij,n_c}\end{bmatrix}. \label{eq:EtaC}
\end{alignat}

\subsection{ARX parametrization and regressor} \label{app:arx}
The model structure $\breve{A}(\q,\zeta_a^n)$ and $\breve{B}(\q,\zeta_b^n)$ of the nonparametric ARX model \eqref{eq:arxInf} is parametrized in terms of the parameters $\zeta^n$. The parameter vector $\zeta^n:=\begin{bmatrix}[\zeta_a^n]^{\top}&[\zeta_b^n]^{\top}\end{bmatrix}^{\top}$ is given by
\begin{alignat}{3}
  &\zeta^n_a = \begin{bmatrix}\zeta^n_{a_1}\\ \zeta^n_{a_2}\\ \vdots\\ \zeta^n_{a_L}\end{bmatrix},\quad
  &\zeta^n_{a_i}&=\begin{bmatrix}\zeta^n_{a_{i1}}\\ \zeta^n_{a_{i2}}\\ \vdots\\ \zeta^n_{a_{iL}}\end{bmatrix},\quad
  &\zeta^n_{a_{ij}}&=\begin{bmatrix}\breve{a}_{ij,1}\\ \breve{a}_{ij,2}\\ \vdots\\ \breve{a}_{ij,n}\end{bmatrix},\label{eq:arxZetaA}\\
  &\zeta^n_b = \begin{bmatrix}\zeta^n_{b_1}\\ \zeta^n_{b_2}\\ \vdots\\ \zeta^n_{b_L}\end{bmatrix},\quad
  &\zeta^n_{b_i}&=\begin{bmatrix}\zeta^n_{b_{i1}}\\ \zeta^n_{b_{i2}}\\ \vdots\\ \zeta^n_{b_{iK}}\end{bmatrix},\quad
  &\zeta^n_{b_{ij}}&=\begin{bmatrix}\breve{b}_{ij,0}\\ \breve{b}_{ij,1}\\ \vdots\\ \breve{b}_{ij,n}\end{bmatrix}.\label{eq:arxZetaB}
\end{alignat}
The regressor $[\varphi^n(t)]^{\top}$ in \eqref{eq:peARX2} is given by $[\varphi^n(t)]^{\top} = \begin{bmatrix}-[\varphi^n_w(t)]^{\top}& [\varphi^n_r(t)]^{\top}\end{bmatrix}$ with
\begin{align}
  [\varphi^n_w(t)]^{\top} &= \begin{bmatrix}[\varphi^n_{w_1}(t)]^{\top} \!&\! [\varphi^n_{w_2}(t)]^{\top} \!&\!\cdots\!&\! [\varphi^n_{w_L}(t)]^{\top}\end{bmatrix}, \\
  [\varphi^n_{w_i}(t)]^{\top} &= \begin{bmatrix}w_i(t-1) \!&\! w_i(t-2) \!&\! \cdots \!&\! w_i(t-n)\end{bmatrix}, \label{eq:arxRegW}\\
  [\varphi^n_r(t)]^{\top} &= \begin{bmatrix}[\varphi^n_{r_1}(t)]^{\top} \!&\! [\varphi^n_{r_2}(t)]^{\top} \!&\! \cdots \!&\! [\varphi^n_{r_K}(t)]^{\top}\end{bmatrix}, \\
  [\varphi^n_{r_i}(t)]^{\top} &= \begin{bmatrix}r_i(t) \!&\! r_i(t\! -\! 1) \!&\! r_i(t\! - \!2) \!&\! \cdots \!&\! r_i(t\! -\! n)\end{bmatrix}. \label{eq:arxRegR}
\end{align}

\subsection{Matrices $Q(\zo)$ and $T(\vartheta^0)$} \label{app:QT}
In order to construct $Q(\zo)$ and $T(\vartheta^0)$, we first define some other matrices.\\

\subsubsection{Zero and identity}
Let $0_{i,j}$ denote a matrix of dimension $i\times j$ with all its elements equal to $0$. Let $I_{i,j}$ denote an identity matrix of dimension $i\times j$, where $I_{i,j}=\begin{bmatrix}I_{i,i}&0_{i,j-i}\end{bmatrix}$ for $i\leq j$ and $I_{i,j}=\begin{bmatrix}I_{j,j}&0_{j,i-j}\end{bmatrix}^{\top}$ for $i\geq j$. Let $I_{k(i,j)}$ denote a block diagonal matrix of $k$ blocks of $I_{i,j}$ and let $I_{\ell(k(i,j))}$ denote a block diagonal matrix of $\ell$ blocks of $I_{k(i,j)}$.

\subsubsection{$\Pi$} Define the matrices
\begin{equation} \label{eq:PiAB}
  \Pi_i^a := \begin{bmatrix}\zeta_{a_{i1}^n}&~0_{n,n_a}\\ \vdots&\vdots\\ \zeta_{a_{i(i-1)}^n}&~0_{n,n_a}\\ \zeta_{a_{ii}^n}&-I_{n,n_a}\\ \zeta_{a_{i(i+1)}^n}&~0_{n,n_a}\\ \vdots&\vdots\\ \zeta_{a_{iL}^n}&~0_{n,n_a} \end{bmatrix},\
  \Pi_i^b := \begin{bmatrix}\zeta_{b_{i1}^n}&0_{n+1,n_a}\\ \zeta_{b_{i2}^n}&0_{n+1,n_a}\\ \vdots&\vdots\\ \zeta_{b_{iK}^n}&0_{n+1,n_a} \end{bmatrix},
\end{equation}
and observe that $\Pi_i^a$ has dimensions $Ln\times (n_a+1)$ and that $\Pi_i^b$ has dimensions $K(n+1)\times (n_a+1)$.

For $x\in\{a,b\}$, define the block matrix
\begin{equation}
  \bar{\Pi}_L^x := \begin{bmatrix}
  Z^x_{0,L}           &Z^x_{1,L-1}         &\cdots  &Z^x_{L-1,1}\\
  R(\Pi_1^x,\Pi_L^x)  &R(\Pi_2^x,\Pi_L^x)  &\cdots  &R(\Pi_L^x,\Pi_L^x)\\
  S_{L-1}(\Pi_1^x)    &S_{L-2}(\Pi_2^x)    &\cdots  &S_{L-L}(\Pi_L^x)\end{bmatrix},
\end{equation}
with $Z^a_{i,j}$ an $i\times j$ block matrix with blocks $0_{Ln,(n_a+1)}$ and with $Z^b_{i,j}$ an $i\times j$ block matrix with blocks $0_{K(n+1),(n_a+1)}$ (that is $Z^a_{i,j} := 0_{iLn,j(n_a+1)}$ and $Z^b_{i,j} := 0_{iK(n+1),j(n_a+1)}$), with \begin{equation}
  R(\Pi_i^x,\Pi_j^x) := \begin{bmatrix}\Pi_i^x&\Pi_{i+1}^x&\cdots&\Pi_j^x\end{bmatrix},
  \quad\text{for } i\leq j,
\end{equation}
and with
\begin{equation}
  S_{i}(\Pi_j^x) := \begin{bmatrix}Z^x_{i,1}&D_{i}(\Pi_j^x)\end{bmatrix},
\end{equation}
with $D_{i}(\Pi_j^x)$ a block diagonal matrix consisting of $i$ blocks of $\Pi_j^x$. Observe that $\bar{\Pi}_L^a$ has dimensions $L^2n\times \frac{1}{2}L(L+1)(n_a+1)$ and $\bar{\Pi}_L^b$ has dimensions $LK(n+1)\times \frac{1}{2}L(L+1)(n_a+1)$.

\subsubsection{Toeplitz}
Let $\mathcal{T}_{i,j}(x)$ denote a Toeplitz matrix of dimension $i\times j$ (with $i\geq j$) given by
\begin{equation}\label{eq:Toeplitz}
  \mathcal{T}_{i,j}(x) :=
  \mathcal{T}_{i,j}(\begin{bmatrix}x_0&\!\!x_1&\!\!\!\cdots&\!\!\!x_{i-1}\end{bmatrix}) = \begin{bmatrix}x_0&\!\!~&\!\!0\\ \vdots&\!\!\ddots&\!\!~\\ x_{j-1}&\!\!\cdots&\!\!x_0\\ \vdots&\!\!\ddots&\!\!\vdots\\ x_{i-1}&\!\!\cdots&\!\!x_{i-j-1} \end{bmatrix}.
\end{equation}

Define the following Toeplitz matrices of dimension $k\times\ell$:
\begin{align}
  \mathcal{T}_{k,\ell}(\breve{a}_{ij}) &:= \mathcal{T}_{k,\ell}(\begin{bmatrix}\breve{a}_{ij,0}&\breve{a}_{ij,1}&\cdots&\breve{a}_{ij,k} \end{bmatrix}), \\
  \mathcal{T}_{k,\ell}(\breve{b}_{ij}) &:= \mathcal{T}_{k,\ell}(\begin{bmatrix}0&\breve{b}_{ij,0}&\breve{b}_{ij,1}&\cdots&\breve{b}_{ij,k} \end{bmatrix}), \\
  \mathcal{T}_{k,\ell}(\bar{c}_{ij}) &:= \mathcal{T}_{k,\ell}(\begin{bmatrix}a_{ij,0}&\bar{c}_{ij,1}&\cdots&\bar{c}_{ij,k}\end{bmatrix}),
\end{align}
Note that for $\breve{a}_{ij}$ it is known that $\breve{a}_{ij,0}=1$ for $i=j$ and $\breve{a}_{ij,0}=0$ for $i\neq j$; and note that for $\bar{c}_{ij}$ it is known that $\bar{c}_{ij,0}=a_{ij,0}=a_{ji,0}$ and $\bar{c}_{ij,k}=0$ for $k>n_c$.

Define the following block matrices
\begin{align}
  \bar{\mathcal{T}}_{k,\ell}(\breve{A}) &:= \begin{bmatrix} \mathcal{T}_{k,\ell}(\breve{a}_{11})&\cdots&\mathcal{T}_{k,\ell}(\breve{a}_{L1})\\ \vdots&~&\vdots\\ \mathcal{T}_{k,\ell}(\breve{a}_{1L})&\cdots&\mathcal{T}_{k,\ell}(\breve{a}_{LL}) \end{bmatrix}, \\
  \bar{\mathcal{T}}_{k,\ell}(\breve{B}) &:= \begin{bmatrix} \mathcal{T}_{k,\ell}(\breve{b}_{11})&\cdots&\mathcal{T}_{k,\ell}(\breve{b}_{L1})\\ \vdots&~&\vdots\\ \mathcal{T}_{k,\ell}(\breve{b}_{1K})&\cdots&\mathcal{T}_{k,\ell}(\breve{b}_{LK}) \end{bmatrix},
\end{align}
where $\bar{\mathcal{T}}_{k,\ell}(\breve{A})$ has dimensions $Lk\times L\ell$ and $\bar{\mathcal{T}}_{k,\ell}(\breve{B})$ has dimensions $Kk\times L\ell$.

For $x\in\{a,b\}$, let $\bar{\mathcal{T}}_{m(k,\ell)}(\breve{X})$ denote a block diagonal matrix consisting of $m$ blocks of $\bar{\mathcal{T}}_{k,\ell}(\breve{X})$. Observe that $\bar{\mathcal{T}}_{n,n_c}(\breve{A})$ has dimension $Ln\times Ln_c$ and that $\bar{\mathcal{T}}_{n+1,n_c}(\breve{B})$ has dimension $K(n+1)\times Ln_c$.

Let $\mathcal{T}_{m(k,\ell)}(\bar{c}_{ij})$ denote a block diagonal matrix consisting of $m$ blocks of $\mathcal{T}_{k,\ell}(\bar{c}_{ij})$. Observe that $\mathcal{T}_{L(n,n)}(\bar{c}_{ij})$ is an $Ln\times Ln$ block diagonal matrix consisting of $L$ blocks of $\mathcal{T}_{n,n}(\bar{c}_{ij})$ and that $\mathcal{T}_{K(n+1,n+1)}(\bar{c}_{ij})$ is an $K(n+1)\times K(n+1)$ block diagonal matrix consisting of $K$ blocks of $\mathcal{T}_{n+1,n+1}(\bar{c}_{ij})$. Finally, define
\begin{equation}
  T_{m(k,\ell)}(\bar{C}) := \begin{bmatrix}
  \mathcal{T}_{m(k,\ell)}(\bar{c}_{11})& \cdots& \mathcal{T}_{m(k,\ell)}(\bar{c}_{1L})\\
  \vdots&~&\vdots\\
  \mathcal{T}_{m(k,\ell)}(\bar{c}_{L1})& \cdots& \mathcal{T}_{m(k,\ell)}(\bar{c}_{LL})
\end{bmatrix}.
\end{equation}

\subsubsection{Matrix $Q(\zo)$} With the matrices defined above, we can now describe the matrix $Q(\zo)$ in \eqref{eq:reg} by
\begin{equation}\label{eq:Q1}
  Q(\zo) = \begin{bmatrix}
    \bar{\Pi}_L^a & 0                  &\bar{\mathcal{T}}_{L(n,n_c)}(\breve{A}^0)\\
    \bar{\Pi}_L^b &-I_{L(K(n+1,n_b))}  &\bar{\mathcal{T}}_{L(n+1,n_c)}(\breve{B}^0)
  \end{bmatrix},
\end{equation}
which has row dimension $L^2n + LK(n+1)$ and column dimension $\frac{1}{2}L(L+1)(n_a+1) + LK(n_b+1) + L^2n_c$.

\subsubsection{Matrix $T(\vartheta^0)$} With the matrices defined above, we can now describe the matrix $T(\vartheta^0)$ in \eqref{eq:reg2} by
\begin{equation}\label{eq:T}
  T(\vartheta^0) = \begin{bmatrix}
    -T_{L(n,n)}(\bar{C}^0) & 0 \\ 0 & -T_{K(n+1,n+1)}(\bar{C}^0) \\
  \end{bmatrix},
\end{equation}
which has dimensions $[L^2n + LK(n+1)]\times [L^2n + LK(n+1)]$.

\bibliographystyle{IEEEtran}
\bibliography{LibraryJournal2021}
\balance

\begin{IEEEbiography}[{\includegraphics[width=1in,height=1.25in,clip,keepaspectratio]{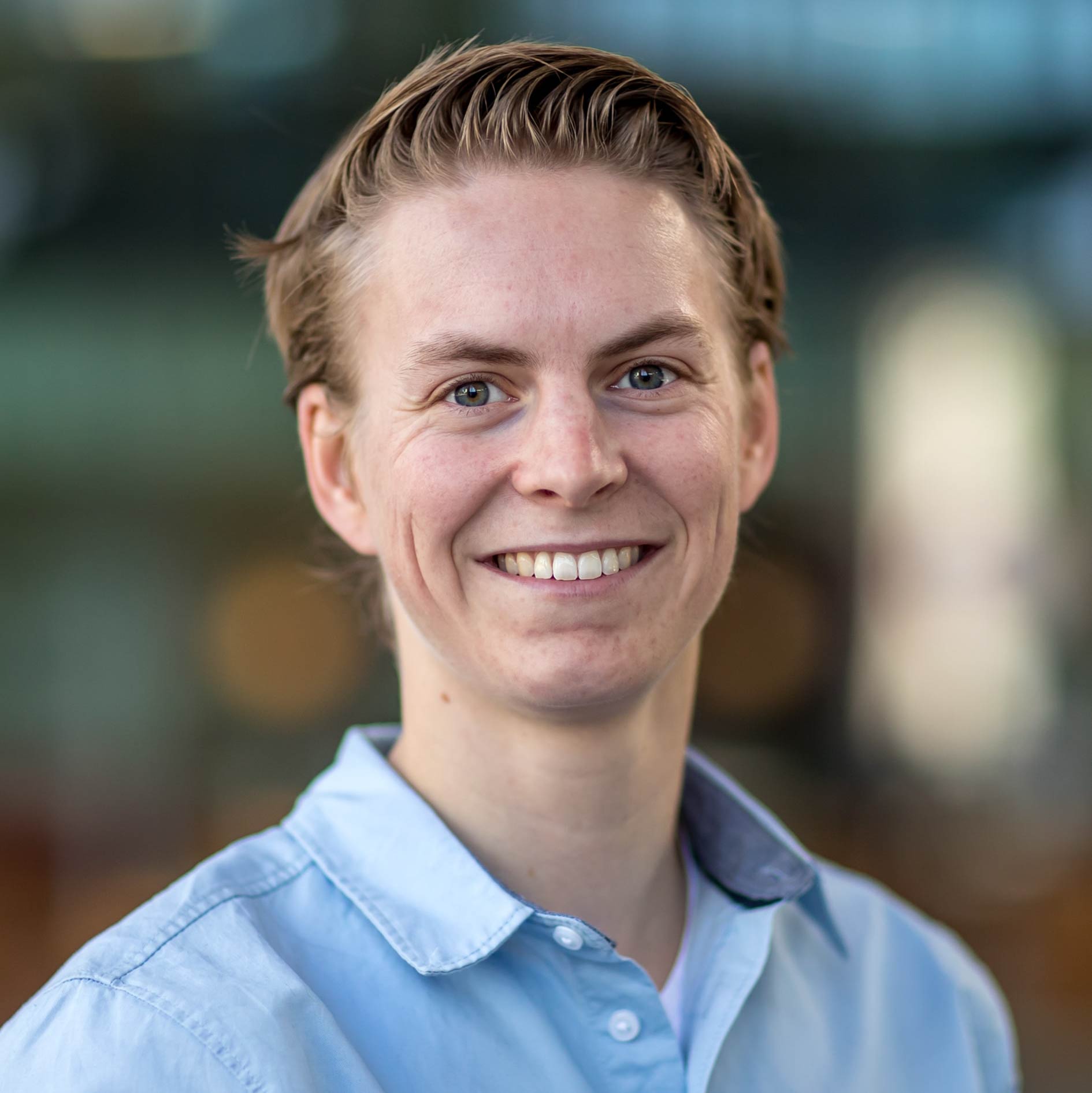}}]{Lizan Kivits} received her bachelor's degree in Electrical Engineering and her master's degree in Systems and Control (with great appreciation) from Eindhoven University of Technology, Eindhoven, The Netherlands, in 2014 and 2017, respectively. In 2017, she started as a researcher in the Control Systems research group at the Department of Electrical Engineering, Eindhoven University of Technology, where she continued her research project as a Ph.D. researcher in 2018. Her research interests include first-principles modeling, data-driven modeling, and dynamic network identification.
\end{IEEEbiography}

\vspace{-2em}

\begin{IEEEbiography}[{\includegraphics[width=1in,height=1.25in,clip,keepaspectratio]{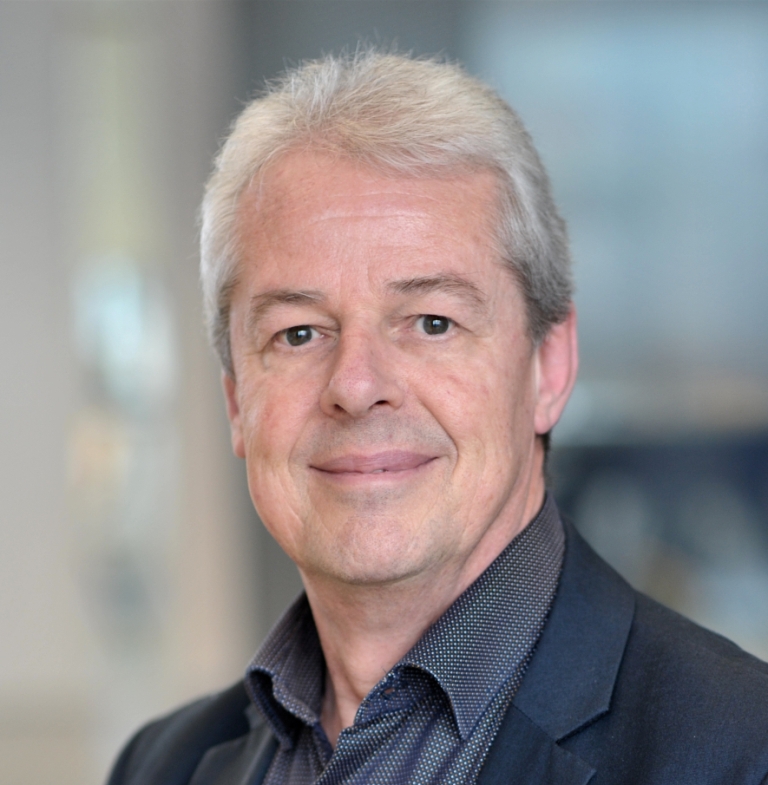}}]{Paul Van den Hof} received the M.Sc. and Ph.D. degrees in electrical engineering from Eindhoven University of Technology, Eindhoven, The Netherlands, in 1982 and 1989, respectively. In 1986 he moved to Delft University of Technology, where he was appointed as Full Professor in 1999. From 2003 to 2011, he was founding co-director of the Delft Center for Systems and Control (DCSC). As of 2011, he is a Full Professor in the Electrical Engineering Department, Eindhoven University of Technology. His research interests include data-driven modeling, identification for control, dynamic network identification, and model-based control and optimization, with applications in industrial process control systems and high-tech systems. He holds an ERC Advanced Research grant for a research project on identification in dynamic networks. Paul Van den Hof is an IFAC Fellow and IEEE Fellow, Honorary Member of the Hungarian Academy of Sciences, and IFAC Advisor. He has been a member of the IFAC Council (1999–2005, 2017-2020), the Board of Governors of IEEE Control Systems Society (2003–2005,2022-2024), and an Associate Editor and Editor of Automatica (1992–2005). In the triennium 2017-2020 he served as Vice-President of IFAC.
\end{IEEEbiography}

\end{document}